\author{Marie-Louise Lackner\footnote{Institute of Discrete Mathematics and Geometry, TU Wien, Vienna, Austria.\newline \texttt{marie-louise.lackner@tuwien.ac.at}} \and Martin Lackner\footnote{Department of Computer Science, 
       		  University of Oxford, Oxford, UK.\newline
              \texttt{martin.lackner@cs.ox.ac.uk}}}
\date{}
\title{On the Likelihood of Single-Peaked Preferences}
\theoremstyle{plain}
\newtheorem{theorem} {Theorem}
\newtheorem{lemma} [theorem] {Lemma}
\newtheorem{proposition} [theorem] {Proposition}
\newtheorem{definition} {Definition}
\newtheorem{corollary} [theorem] {Corollary}
\theoremstyle{definition}
\newtheorem{example} [theorem]{Example}
\newtheorem*{example*}{Example}
\theoremstyle{remark}
\newtheorem{problem*} {Open Problem}
\newtheorem{remark} [theorem]{Remark}
\newcommand{\calT}{{\mathcal{T}}}
\newcommand{\calP}{{\mathcal{P}}}
\newcommand{\cp}{(C,\mathcal{P})}
\newcommand{\GammaSP}{\Gamma_{\text{sp}}}
\newcommand{\ccfont}[1]{\textsf{#1}}
\newcommand{\w}[1]{\ifmmode{\textnormal{\ccfont{W[#1]}}}\else{\textnormal{\ccfont{W[#1]}}}\fi}
\newcommand{\card}[1]{\ensuremath{|#1|}}
\newcommand{\exend}{\ifmmode\hbox{$\dashv$}\else{\unskip\nobreak\hfil\penalty50\hskip1em\null\nobreak\hfil\hbox{$\dashv$}
\parfillskip=0pt\finalhyphendemerits=0\endgraf}\fi}
\newcommand{\defend}{\ifmmode\hbox{$\dashv$}\else{\unskip\nobreak\hfil
\penalty50\hskip1em\null\nobreak\hfil\hbox{$\dashv$}
\parfillskip=0pt\finalhyphendemerits=0\endgraf}\fi}
\newcommand{\vinc}[3]{
\begin{tikzpicture}[baseline = (X.base)]
	\useasboundingbox (0.1,0) rectangle (#1*0.23,0.1);
	\foreach \x/\y in {#2}
	{
		\draw (\x*0.2,0) node (X) {$\y$};
	}
	
	\foreach \z in {#3}
	{
		\ifnum 0<\z
			\ifnum \z<#1
				\draw[thick] (\z*0.2-0.07,-0.19) -- (\z*0.2+0.27,-0.19);
			\fi
		\fi
		
		\ifnum 0=\z
			\draw[thick] (0.07,0.1) -- (0.07,-0.19) -- (0.21,-0.19);
		\fi
		
		\ifnum \z=#1
			\draw[thick] (\z*0.2+0.14,0.1) -- (\z*0.2+0.14,-0.19) -- (\z*0.2,-0.19);
		\fi
	}
\end{tikzpicture}
}
\newcommand{\bivinc}[4]{\!
\begin{tikzpicture}[baseline]

	\foreach \x/\y in {#2}
	{
		\draw (\x*0.2,0.3) node {$\x$};
		\draw (\x*0.2,0) node {$\y$};
	}
	
	\foreach \z in {#3}
	{
		\ifnum 0<\z
			\ifnum \z<#1
				\draw[thick] (\z*0.2-0.07,-0.19) -- (\z*0.2+0.27,-0.19);
			\fi
		\fi
		
		\ifnum 0=\z
			\draw[thick] (0.07,0.1) -- (0.07,-0.19) -- (0.21,-0.19);
		\fi
		
		\ifnum \z=#1
			\draw[thick] (\z*0.2+0.14,0.1) -- (\z*0.2+0.14,-0.19) -- (\z*0.2,-0.19);
		\fi
	}
	
	\foreach \z in {#4}
	{
		\ifnum 0<\z
			\ifnum \z<#1
				\draw[thick] (\z*0.2-0.07,0.49) -- (\z*0.2+0.27,0.49);
			\fi
		\fi
		
		\ifnum 0=\z
			\draw[thick] (0.07,0.21) -- (0.07,0.49) -- (0.21,0.49);
		\fi
		
		\ifnum \z=#1
			\draw[thick] (\z*0.2+0.14,0.21) -- (\z*0.2+0.14,0.49) -- (\z*0.2,0.49);
		\fi
	}
\end{tikzpicture}
\!
}
\def\multiset#1#2{\ensuremath{\left(\kern-.3em\left(\genfrac{}{}{0pt}{}{#1}{#2}\right)\kern-.3em\right)}}
\begin{document}

\maketitle

\begin{abstract}
This paper contains an extensive combinatorial analysis of the single-peaked domain restriction and investigates the likelihood that an election is single-peaked.
We provide a very general upper bound result for domain restrictions that can be defined by certain forbidden configurations.
This upper bound implies that many domain restrictions (including the single-peaked restriction) are very unlikely to appear in a random election chosen according to the Impartial Culture assumption.
For single-peaked elections, this upper bound can be refined and complemented by a lower bound that is asymptotically tight.
In addition, we provide exact results for elections with few voters or candidates.
Moreover, we consider the P\'{o}lya urn model and the Mallows model and obtain lower bounds showing that single-peakedness is considerably more likely to appear for certain parameterizations.
\end{abstract}

\section{Introduction}
The single-peaked restriction~\citep{black} is an extensively studied domain restriction in social choice theory. 
An election, i.e., a collection of preferences represented as total orders on a set of candidates, is single-peaked if the candidates can be ordered linearly---on a so-called axis---so that each preference is either strictly increasing along this ordering, or strictly decreasing, or first increasing and then decreasing. See Figure~\ref{fig:sp-ex} for examples and Section~\ref{sec:preliminaries} for formal definitions.
Intuitively, the axis reflects the society's ordering of the candidates and voters always prefer candidates that are closer to their ideal candidate over those farther away.
In political elections, for example, this axis could reflect the left-right spectrum of the candidates or a natural ordering of political issues such as the maximum income tax.

\begin{figure}
\centering
\begin{tikzpicture}[yscale=0.65,xscale=0.9]

  \def\xmin{1}
  \def\xmax{7}
  \def\ymin{0}
  \def\ymax{7}
  
  \draw[step=1cm,black!20,very thin] (\xmin,\ymin) grid (\xmax,\ymax);

  \draw[->] (\xmin -0.5,\ymin) -- (\xmax+0.5,\ymin) node[right] {axis};
  \foreach \x/\xtext in {1/c_1, 2/c_2, 3/c_3, 4/c_4, 5/c_5, 6/c_6, 7/c_7}
    \draw[shift={(\x,\ymin)}] (0pt,2pt) -- (0pt,-2pt) node[below] {$\xtext$};

  \foreach \x/\y in {4/7,5/6,3/4, 6/5, 2/3, 7/2, 1/1}
    \node[fill=black, circle, inner sep=0.6mm] at (\x,\y) {};
    
  \draw[solid,black] (1,1)--(2,3)--(3,4)--(4,7) -- (5,6)--(6,5)--(7,2);

  \foreach \x/\y in {1/2,2/4,3/7, 4/5, 5/6, 6/3, 7/1}
    \node at (\x,\y) {$\star$};
   
   \draw[dashed,black] (1,2)--(2,4)--(3,7)--(4,5) -- (5,6)--(6,3)--(7,1);

\end{tikzpicture}
\label{fig:sp-ex}
\caption{The vote $V_1: c_4> c_5> c_6> c_3> c_2> c_7> c_1$, shown as a solid line, is single-peaked with respect to the axis $c_1<c_2<c_3<c_4<c_5<c_6<c_7$. The vote $V_2: c_3>c_5>c_4>c_2>c_6>c_1>c_7$, depicted as a dashed line, is not single-peaked with respect to this axis since both $c_3$ and $c_5$ form a peak. However, note that both votes are single-peaked with respect to the axis $c_1<c_2<c_3<c_5<c_4<c_6<c_7$.}
\end{figure}

Single-peaked preferences have several nice properties.
First, they guarantee that a Condorcet winner exists and further that the pairwise majority relation is transitive \citep{inada1969simple}.
Thus single-peaked preferences are a way to escape Arrow's paradox \citep{arrow1950difficulty}.
Second, non-manipulable voting rules exist for single-peaked preferences (Moulin 1980) and hence the single-peaked restriction also offers a way to circumvent the Gibbard-Satterthwaite paradox \citep{gib:j:polsci:manipulation,sat:j:polsci:manipulation}.
By adopting an algorithmic viewpoint, a third advantage becomes apparent.
Restricting the input to single-peaked preferences often allows for faster algorithms for computationally hard voting problems \citep{bra-bri-hem-hem:j:single-peaked,BetzlerSU13fully-proportional,wal:c:uncertainty,Faliszewski201189}.

In this paper we perform an extensive combinatorial analysis of the single-peaked domain.
Our aim is to establish results on the likelihood that an election is single-peaked for some axis.
To be more precise, we allow the  axis to be chosen depending on the preferences and do not assume that it is given together with the election.
We consider three probability distributions for elections: the Impartial Culture (IC) assumption in which all total orders are equally likely and are chosen independently, the P\'{o}lya urn model which assumes a certain homogeneity among voters and Mallows model in which the probability of a vote depends on the closeness to a given reference vote (with respect to the Kendall-tau distance).

Our main results are the following:%
\begin{itemize}
\item \emph{Configuration definable restrictions:} Many domain restrictions can be characterized by forbidden configurations, in particular the single-peaked domain.
We prove a close connection between configurations and permutations patterns.
This novel connection allows us to obtain a very general result (Theorem~\ref{thm:SW-bound}), showing that many domain restrictions characterized by forbidden configurations are very unlikely to appear in a random election chosen according to the Impartial Culture assumption.
More precisely, while the total number of elections with $n$ votes and $m$ candidates is equal to $(m!)^n$, the number of elections belonging to such a domain restriction can be bounded by $m! \cdot c^{n m}$ for some constant $c$.
\item \emph{Counting single-peaked elections:} We perform a detailed combinatorial analysis of the single-peaked domain by counting the number of single-peaked elections.
The number of single-peaked elections immediately yields the corresponding probability with respect to the Impartial Culture (IC) assumption, which is the number of single-peaked elections divided by the total number of elections.
We establish an upper bound for the number of single-peaked elections which asymptotically matches our lower bound result (Theorem~\ref{thm:count-sp}).
In addition, we show exact enumeration results for elections with two voters or up to four candidates (Theorem~\ref{thm:SP_small_profiles}).
Our results rigorously show that the single-peaked restriction is highly unlikely to appear in random elections chosen according to the IC assumption.
This holds even for elections with few votes and candidates (cf. Section~\ref{sec:num}).
Most of our results can easily be translated to the Impartial Anonymous Culture (IAC) assumption (Proposition~\ref{thm:iac}).
\item \emph{P\'{o}lya urn model:} In contrast to the IC assumption, single-peaked elections are considerably more likely if elections are chosen according to the P\'{o}lya urn model.
We provide a lower bound on the corresponding likelihood (Theorem~\ref{thm:polya-bound}) and show that, if a sufficiently strong homogeneity is assumed, the probability of an election with $n$ votes being single-peaked is larger than $1/n$ (Corollary~\ref{cor:polya}).
\item \emph{Mallows model:} We encounter the most likely occurrence of single-peaked elections under Mallows model. As for the P\'{o}lya urn model we establish a lower bound result on the likelihood (Theorem~\ref{thm:mallow}).
If the dispersion parameter $\phi$ is sufficiently small%
, we are able to show that single-peaked elections are likely to appear (Corollary~\ref{cor:mallows} and Table~\ref{tab:results-mallows}).
\end{itemize}

\paragraph{Related work.} 

Computing the likelihood of properties related to voting has been the focus of a large body of research.
The most fundamental question in this line of research is the choice of appropriate probability distributions, see the survey of \cite{critchlow_probability_1991}.
We would like to mention two particular properties of elections that have been studied from a probability theoretic point of view: the likelihood of manipulability and the likelihood of having a Condorcet winner.

An election is manipulable if a voter or a coalition of voters is better off by not voting sincerely but by misrepresenting their true preferences.
The Gibbard-Satterthwaite paradox \citep{gib:j:polsci:manipulation,sat:j:polsci:manipulation} states that every reasonable voting rule for more than two candidates is susceptible to manipulation.
However, the Gibbard-Satterthwaite paradox does not offer insight into how likely it is that manipulation is possible.
Determining this likelihood both for single manipulators and coalitions of manipulators has been the focus of intensive research.
Results have been obtained under a variety of probability distributions:
for example under the Impartial Culture assumption \citep{slinko_asymptotic_2002,slinko_asymptotic_2002a,fri-kal-nis:c:quantiative-gib-sat,isaksson2012geometry}, the P\'{o}lya urn model \citep{lepelley2003voting}, the Impartial Anonymous Culture \citep{favardin2002borda,slinko_how_2005}.

The likelihood that an election has a Condorcet winner or, its converse, the likelihood of the Condorcet paradox has been the focus of many publications
(see the survey of \cite{condorcet-paradox} as well as more recent work of  \cite{gehrlein2013impact,condorcet-diversity} for more recent research).
In particular, we would like to mention that the likelihood of an election with three candidates having a Condorcet winner under the Impartial Anonymous Culture assumption is $\frac{15(n+3)^2}{16(n+2)(n+4)}$ for odd $n$ and $\frac{15(n+2)(n^2+8n+8)}{16(n+1)(n+3)(n+5)}$ for even $n$ \citep{gehrlein02}.
We will comment on the relation between these result and our results in Section~\ref{sec:concl}.

\paragraph{Organization.} Preliminaries are established in Section~\ref{sec:preliminaries}. The results on configuration definable restrictions can be found in Section~\ref{sec:stanley wilf}, results on counting single-peaked elections in Section~\ref{sec:IC}, results on the IAC assumption in Section~\ref{sec:iac}, results on the P\'{o}lya urn model in Section~\ref{sec:polya} and results on the Mallows model in Section~\ref{sec:mallow}.
In Section~\ref{sec:num} we provide numerical evaluations of our results and discuss their implications.
We conclude the paper in Section~\ref{sec:concl} with directions for future research.

\section{Preliminaries}
\label{sec:preliminaries}

\paragraph{Sets and orders.}
Let $S$ be a finite set.
A relation on $S$ is total if for every $a,b\in S$, either the pair $(a,b)$ or $(b,a)$ is contained in the relation.
A \emph{total order on $S$} is a reflexive, antisymmetric, transitive and total relation.
Let $T$ be a total order of $S$.
Instead of writing $(a,b)\in T$, we write $a\leq_T b$ or $b\geq_T a$.
We write $a <_T b$ or $b >_T a$ to state that $a\leq_T b$ and $a\neq b$.
As a short form, we write $T:s_1 s_2 s_3 \ldots s_i$ instead of $s_1 >_T s_2 >_T s_3>_T \ldots>_T  s_i$ for $s_1, s_2, \ldots, s_i$ in $S$.
We write $T(i)$ to denote the $i$-th largest element with respect to $T$.

\paragraph{Permutations.}
A permutation $\pi$ of a finite set $S$ is a bijective function from $S$ to $S$.
We write $\pi^{-1}$ for the inverse function of $\pi$.
A permutation of the set $\{1,\ldots,m\}$ is called an $m$-permutation.
We shall write an $m$-permutation $\pi$ as the sequence of values $\pi(1) \pi(2) \ldots \pi(m)$.
For example $\pi=321$ is the permutation with $\pi(1)=3, \pi(2)=2$ and $\pi(3)=1$.
Every pair $(T_1, T_2)$ of total orders on a set with $m$ elements yields an $m$-permutation $p(T_1, T_2)$, which is defined as follows:
$i$ maps to $j$ if the $i$-th largest element in $T_1$ equals the $j$-th largest element in $T_2$.
For $T_1:bac$ and $T_2:cab$ we have $p(T_1, T_2)=321$.
Note that $p(T_1,T_2)=p(T_2,T_1)^{-1}$.

\paragraph{Elections.}
An $(n,m)$-election $\cp$ consists of a size-$m$ set $C$ and an $n$-tuple $\mathcal{P}=(V_1,\ldots,V_n)$ of total orders on $C$.
The total orders $V_1,\ldots,V_n$ represent votes or preferences.
We write $V\in \mathcal{P}$ to denote that there exists an index $i\in[n]$ such that $V=V_i$.
Given a vote $V_i\in\calP$, the intuitive meaning of $V_i: c_j c_k$ is that the $i$-th voter prefers candidate $c_j$ to candidate $c_k$.

We assume that candidate sets are chosen from a fixed, infinite set.
When counting elections we do not care about the specific names these candidates have.
That means when we count elections we fix the candidate set to $\{c_1, c_2$, $\ldots, c_m\}$.
Note that the number of $(n,m)$-elections is $(m!)^n$.
Throughout the paper we only consider  $(n,m)$-elections with $n\geq2$ and $m\geq2$.

\paragraph{Probability distributions over elections.}
We consider four probability distributions in this paper.
The first and simplest is the Impartial Culture (IC) which assumes that in an election all votes, i.e., total orders of candidates, are equally likely and are chosen independently.
Thus, the IC assumption can be seen as the uniform distribution over total orders on candidates.
The results in our paper concerning IC do not state probabilities but rather count the number of elections.
If, e.g., the number of  single-peaked $(n,m)$-elections is $a(n,m,SP)$, then the probability under the IC assumption that an $(n,m)$-election is single-peaked is $\frac{a(n,m,SP)}{(m!)^n}$.
It is important to note that in our paper elections contain an ordered list of votes. Thus, we distinguish elections that consist of the same votes but these votes appear in a different order.
This is in contrast to the Impartial Anonymous Culture (IAC) assumption, in which elections contain a multiset of votes and thus elections are not ordered. The IAC assumption is briefly considered in Section~\ref{sec:iac}.

In addition to the IC assumption, we consider the P\'{o}lya urn model and the Mallows model. Both distributions are generalizations of the IC assumption and generate more structured elections.
We are going to define the P\'{o}lya urn and the Mallows model in Section~\ref{sec:polya} and \ref{sec:mallow}, respectively.

\paragraph{Single-peaked preferences.}

The single-peaked restriction 
assumes that the candidates can be ordered linearly on a so-called axis 
and voters prefer candidates close to their ideal point to candidates that are further away. 
\begin{definition}
Let $\cp$ be an election and $A$  a total order of $C$.
A vote $V$ on $C$ \emph{contains a valley with respect to $A$ on the candidates $c_1,c_2,c_3\in C$} if $A: c_1\, c_2\, c_3$ and $V$ ranks $c_2$ below $c_1$ and $c_3$.
The election $\cp$ is \emph{single-peaked with respect to $A$} if for every $V\in\mathcal{P}$ and for all candidates $c_1,c_2,c_3\in C$, $V$ does not contain a valley with respect to $A$ on $c_1,c_2,c_3$.
We then call the total order $A$ the axis.
The election $\cp$ is \emph{single-peaked} if there exists a total order $A$ of $C$ such that $\cp$ is single-peaked with respect to $A$.
\end{definition}

\begin{remark}
Given an axis on $m$ candidates, there are $2^{m-1}$ votes that are single-peaked with respect to this axis \citep{esc-lan-oez:c:single-peak}. This can be seen as follows:
The last ranked candidate has to be one of the two outermost candidates on the axis and hence there are two possibilities.
Once we have picked this last candidate, we can iterate the argument for the next lowest ranked candidate, where we again have two possibilities.
Thus, for all positions in the total order (except for the top ranked candidate), there are two candidates to choose from---which yields $2^{m-1}$ possibilities in total.%
\label{rem:number-of-sp-votes}
\end{remark}

\section{A general result based on permutation patterns}
\label{sec:stanley wilf}

Before we study the single-peaked domain in detail, we prove a general result that is applicable to a large class of domain restrictions including the single-peaked domain.
To precisely define this class of domain restrictions, we require the notion of configuration definability.

\subsection{Configuration definable domain restrictions} 

Single-peaked elections may also be defined in the following way:

\begin{theorem}[\citealp{bal-hae:j:characterization-single-peaked}]
\label{thm:bal-hae-sp}
An $(n,m)$-election $(C,\calP)$ is single-peaked if and only if
there do not exist candidates $a,b,c,d\in C$ and indices $i,j\in[n]$ such that
\begin{itemize}
\item $V_i: abc$, $V_i: db$, 
\item $V_j: cba$ and $V_j: db$ holds
\end{itemize}
and there do not exist candidates $a,b,c\in C$ and indices $i,j,k\in[n]$ such that
\begin{itemize}
\item $V_i: ba, ca$ (i.e., $a$ is ranked below $b$ and $c$),
\item $V_j: ab, cb$ and 
\item $V_k: ac, bc$ holds.
\end{itemize}
\end{theorem}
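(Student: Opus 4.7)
The plan is to handle the two directions of the biconditional separately. Necessity (single-peaked implies no forbidden configuration) admits a direct case analysis, while sufficiency is the substantive content and I would prove it by induction on the number of candidates.

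For \emph{necessity}, assume $\cp$ is single-peaked with respect to an axis $A$. The three-candidate configuration is immediately ruled out, since the bottom-ranked candidate of any single-peaked vote must be one of the two endpoints of $A$; hence at most two distinct candidates can serve as the worst element across the election, contradicting the three witnesses $a,b,c$ demanded. For the four-candidate configuration, the opposing sub-votes $V_i: abc$ and $V_j: cba$ force the induced axis ordering of $\{a,b,c\}$ to be either $abc$ or $cba$: each of the four remaining orderings of three points creates a valley in $V_i$ or in $V_j$. In particular, $b$ lies strictly between $a$ and $c$ on $A$. Now consider the placement of $d$. If $d$ lies on the $c$-side of $b$ then the axis triple $(a,b,d)$ combined with $V_i: a>b$ and $V_i: d>b$ makes $b$ a valley in $V_i$; symmetrically, if $d$ is on the $a$-side, the triple $(d,b,c)$ witnesses a valley in $V_j$ via $V_j: c>b$ and $V_j: d>b$. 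Either way single-peakedness is violated.

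For \emph{sufficiency}, I would proceed by induction on $m = |C|$, with base case $m \le 3$ checked by a small direct argument. Let $W \subseteq C$ denote the set of candidates that some voter in $\calP$ ranks last. The absence of the three-candidate configuration yields $|W| \le 2$, so we may pick $\ell \in W$ to sit at one end of the eventual axis. Since sub-elections inherit the non-existence of forbidden configurations, the restriction of $\cp$ to $C \setminus \{\ell\}$ admits an axis $A'$ by the induction hypothesis, and the axis $A$ is obtained by appending $\ell$ to an appropriately chosen end of $A'$.

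The main obstacle, and the technically delicate core of the proof, is the verification that no valley of the form $(\ell, y, z)$ arises in any vote once $\ell$ is appended. Small examples show that the orientation of $A'$ cannot be chosen arbitrarily, so I would let the second element of $W$ (or any candidate if $|W|=1$) dictate which end of $A'$ faces $\ell$. To rule out a putative valley given by some $V_i$ with $V_i:\ell>y$ and $V_i:z>y$, the plan is to combine $V_i$ with a voter $V_k$ that ranks $\ell$ last (such a voter exists because $\ell \in W$) and exploit the single-peaked structure of the restricted election with respect to $A'$ to locate an intermediate candidate $b$ between $\ell$ and $y$ on $A$. The four candidates $\{y, b, z, \ell\}$ together with $V_i$ and $V_k$ then realize the forbidden four-candidate configuration, contradicting the hypothesis. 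The bookkeeping of how degenerate cases ($\ell$ adjacent to $y$, $|W| \le 1$, very few voters) are absorbed, and which vote plays which role in the four-candidate witness, is the main source of difficulty.
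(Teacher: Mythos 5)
The paper itself offers no proof of this statement—Theorem~\ref{thm:bal-hae-sp} is imported from \citet{bal-hae:j:characterization-single-peaked}—so I can only judge your argument on its own terms. Your necessity direction is essentially correct. One repair is needed for the three-candidate configuration: the hypothesis $V_i: ba, ca$ says $a$ is worst \emph{among} $\{a,b,c\}$ in $V_i$, not that $a$ is bottom-ranked in $V_i$, so you must first restrict each vote and the axis to $\{a,b,c\}$ (restrictions of single-peaked votes remain single-peaked with respect to the induced axis) before invoking the fact that the minimum sits at an endpoint. Your treatment of the four-candidate configuration—forcing $b$ to be the middle of the induced axis on $\{a,b,c\}$ and case-splitting on which side of $b$ the candidate $d$ lies—is complete and correct.

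The sufficiency direction, however, has a genuine gap, and it sits precisely where all the content of the theorem lives. Your induction hypothesis delivers only \emph{some} axis $A'$ for the restriction to $C\setminus\{\ell\}$, whereas the step you need is that $\ell$ can be appended to an end of \emph{that particular} $A'$. A configuration-free sub-election can admit several inequivalent axes, and nothing in your setup guarantees the one returned by the induction is extendable; you must either strengthen the induction hypothesis (carrying a canonical axis with prescribed endpoints) or prove that a valley arising from appending $\ell$ to the designated end of \emph{any} admissible $A'$ forces a forbidden configuration. The latter is what you gesture at, but the derivation is not carried out: you never specify which two votes realize the roles of $V_i:abc$ and $V_j:cba$, which candidate plays $d$, or why the required opposite orderings exist at all; and the cases you defer ($y$ adjacent to $\ell$ on the axis, $|W|=1$ where both ends of $A'$ must be tried, the interaction with the second element of $W$) are not routine bookkeeping—they are the proof. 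The overall strategy (place last-ranked candidates at the ends and recurse, as in the standard recognition algorithms) is viable, but as written the sufficiency half is a plan rather than a proof.
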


Note that this theorem defines single-peakedness without referring to an axis.
Indeed, single-peakedness is now defined as a local property in the sense that certain \emph{configurations} must not be contained in the election.
Similar definitions have also been found for the single-crossing~\citep{DBLP:journals/scw/BredereckCW13} and group-separable~\citep{bal-hae:j:characterization-single-peaked} domain.
For other domains such as value-, worst-, medium and best-restricted preferences, a characterization via configurations follows immediately from the original definitions \citep{sen:j:possibility,sen-pat:j:majority}.
Let us now exactly define what it means for a domain restriction to be \emph{configuration definable}.

\begin{definition}%
An $(l,k)$-\emph{configuration $(S,\mathcal{T})$} consists of a finite set $S$ of cardinality $k$ and a tuple $\mathcal{T}=(T_1,\ldots,T_l)$, where $T_1,\ldots,T_l$ are total orders on $S$.
An election $(C,\mathcal{P})$ \emph{contains} configuration $\mathcal{C}$
if there exist an injective function $f$ from $[l]$ into $[n]$
and an injective function $g$ from $S$ into $C$
such that, for any $x,y\in S$ and $i\in [l]$, it holds that
$T_i: xy$ implies $V_{f(i)}: g(x)g(y)$.
\label{def:configuration-containment}
\end{definition}

We use $(S,\mathcal{T})\sqsubseteq(C,\mathcal{P})$ as a shorthand notation to denote that the election $(C,\mathcal{P})$ contains the configuration $(S,\mathcal{T})$.
An election $(C,\mathcal{P})$ \emph{avoids} a configuration $(S,\mathcal{T})$ if $(C,\mathcal{P})$ does not contain $(S,\mathcal{T})$.
In such a case we say that $(C,\mathcal{P})$ is \emph{$(S,\mathcal{T})$-restricted}.
If the set $S$ is clear from the context, we omit it and just use $\mathcal{T}$ to describe a configuration.

\begin{example}
Let us consider an election $(C,\mathcal{P})$ with $C=\{u,v,w,x,y\}$ and $\calP=(uvwxy,wyvux,yuxwv)$ and 
a configuration $(S,\mathcal{T})$ with $S=\{a,b,c,d\}$ and $\calT=(dabc,cdba)$.
Election $(C,\mathcal{P})$ contains the configuration $(S,\mathcal{T})$ as witnessed by the functions $f:\{1\mapsto 1, 2\mapsto 3\}$ and $g:\{a\mapsto v, b\mapsto x, c\mapsto y, d\mapsto u\}$.
In Figure~\ref{fig:ex:isomorph}, the functions $f$ and $g$ are depicted graphically.
\begin{figure}
\begin{center}
\usetikzlibrary{shapes,snakes}

\begin{tikzpicture}[scale=0.8]
\node[] at (2,1.2)  {Configuration:};
\node[draw, thick,rectangle, inner sep=7pt] (c00) at (0,0)  {d};
\node[] (c00) at (1.4*0.5,0)  {$>$};
\node[draw, thick,regular polygon,regular polygon sides=5] (c10) at (1.4*1,0)  {a};
\node[] (c00) at (1.4*1.5,0)  {$>$};
\node[draw, thick,circle] (c20) at (1.4*2,0)  {b};
\node[] (c00) at (1.4*2.5,0)  {$>$};
\node[draw, thick,diamond] (c30) at (1.4*3,0)  {c};
\node[draw, thick,diamond] (c01) at (0,-1)  {c};
\node[draw, thick,rectangle, inner sep=7pt] (c11) at (1.4*1,-1)  {d};
\node[draw, thick,circle] (c21) at (1.4*2,-1)  {b};
\node[draw, thick,regular polygon,regular polygon sides=5] (c31)  at (1.4*3,-1) {a};
\node[] (c00) at (1.4*0.5,-1)  {$>$};
\node[] (c00) at (1.4*1.5,-1)  {$>$};
\node[] (c00) at (1.4*2.5,-1)  {$>$};

\draw[draw,->,thick]  (1.4*3.5,0) --(7,0.5) node [midway, sloped,  above, fill=white] {f};
\draw[->,thick]  (1.4*3.5,-1) --(7,-1.5) node [midway, sloped,  above, fill=white] {f};

\node[] at (10.5,1.5)  {Election:};
\node[draw, thick,rectangle, inner sep=7pt] (p00) at (1.4*8-3.5,0.5)  {u};
\node[draw, thick,regular polygon,regular polygon sides=5] (p10) at (1.4*9-3.5,0.5)  {v};
\node at (1.4*10-3.5,0.5)  {w};
\node[draw, thick,circle]  at (1.4*11-3.5,0.5)  {x};
\node[draw, thick,diamond]at (1.4*12-3.5,0.5)  {y};
\node at (1.4*8-3.5,-0.5)  {w};
\node at (1.4*9-3.5,-0.5)  {y};
\node at (1.4*10-3.5,-0.5)  {v};
\node at (1.4*11-3.5,-0.5)  {u};
\node at (1.4*12-3.5,-0.5)  {x};
\node[draw, thick,diamond]at (1.4*8-3.5,-1.5)  {y};
\node[draw, thick,rectangle, inner sep=7pt] (p00) at (1.4*9-3.5,-1.5)  {u};
\node[draw, thick,circle]  at (1.4*10-3.5,-1.5)  {x};
\node at (1.4*11-3.5,-1.5)  {w};
\node[draw, thick,regular polygon,regular polygon sides=5] at (1.4*12-3.5,-1.5)  {v};

\node[] (c00) at (1.4*8.5-3.5,0.5)  {$>$};
\node[] (c00) at (1.4*9.5-3.4,0.5)  {$>$};
\node[] (c00) at (1.4*10.5-3.5,0.5)  {$>$};
\node[] (c00) at (1.4*11.45-3.5,0.5)  {$>$};
\node[] (c00) at (1.4*8.5-3.5,-0.5)  {$>$};
\node[] (c00) at (1.4*9.5-3.4,-0.5)  {$>$};
\node[] (c00) at (1.4*10.5-3.5,-0.5)  {$>$};
\node[] (c00) at (1.4*11.45-3.5,-0.5)  {$>$};
\node[] (c00) at (1.4*8.5-3.45,-1.5)  {$>$};
\node[] (c00) at (1.4*9.5-3.45,-1.5)  {$>$};
\node[] (c00) at (1.4*10.5-3.45,-1.5)  {$>$};
\node[] (c00) at (1.4*11.45-3.5,-1.5)  {$>$};

\end{tikzpicture}
\end{center}%
\caption{The configuration on the left-hand side is contained in the election on the right-hand side as witnessed by $f:\{1\mapsto 1, 2\mapsto 3\}$ and $g:\{a\mapsto v, b\mapsto x, c\mapsto y, d\mapsto u\}$.}%
\label{fig:ex:isomorph}
\end{figure}
\end{example}

By considering all linearizations of the partial orders appearing in Theorem~\ref{thm:bal-hae-sp} we can now restate it as follows.

\begin{theorem}[\citealp{bal-hae:j:characterization-single-peaked}]
An election is single-peaked if and only if it avoids 
\begin{itemize}
\item the following $(2,4)$-configurations:

$(dabc,dcba)$,
$(adbc,dcba)$,
$(dabc,$ $cdba)$ and  
$(adbc,cdba)$ 
\item as well as the following $(3,3)$-configurations:

$(bca,acb,$ $abc)$,
$(cba,acb,abc)$,
$(bca,cab,abc)$,
$(cba$, $cab$, $abc)$,
$(bca,acb,bac)$,
$(cba,$ $acb,$ $bac)$,
$(bca,$ $cab,$ $bac)$,
$(cba,cab,bac)$.
\end{itemize}
\label{thm:sp-confs}
\end{theorem}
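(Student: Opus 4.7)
The plan is to show that Theorem~\ref{thm:sp-confs} is a direct reformulation of Theorem~\ref{thm:bal-hae-sp} in the configuration language of Definition~\ref{def:configuration-containment}. The patterns in Theorem~\ref{thm:bal-hae-sp} are specified by \emph{partial} orders on the relevant candidates (e.g.\ the condition ``$V_i\colon abc$, $V_i\colon db$'' prescribes $a>b>c$ and $d>b$ while leaving the position of $d$ relative to $a$ free), so the task reduces to enumerating all linearizations of these partial orders and checking that each resulting total-order tuple appears in the list.

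First I would handle the $(2,4)$-case. The partial order on $V_i$ forces $a>b>c$ and $d>b$; since $d$ may only be placed above $a$ or strictly between $a$ and $b$, there are exactly two linearizations, $dabc$ and $adbc$. The analogous enumeration for $V_j$ yields $dcba$ and $cdba$. Taking the four possible combinations gives the four $(2,4)$-configurations listed. For the $(3,3)$-case, each vote condition pins down one of the three candidates as the minimum: $V_i$ has $a$ at the bottom (two linearizations $bca$ and $cba$), $V_j$ has $b$ at the bottom ($acb$ and $cab$), and $V_k$ has $c$ at the bottom ($abc$ and $bac$). The $2^3 = 8$ combinations are exactly the eight $(3,3)$-configurations in the statement.

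It remains to verify the equivalence of ``contains a forbidden pattern in the sense of Theorem~\ref{thm:bal-hae-sp}'' and ``contains one of the listed configurations in the sense of Definition~\ref{def:configuration-containment}''. For the forward direction, any witness $(a,b,c,d,i,j)$ to a forbidden pattern in Theorem~\ref{thm:bal-hae-sp} in particular gives total orders $V_i$ and $V_j$ on $C$; restricting these to $\{a,b,c,d\}$ yields one of the four enumerated linearization pairs, so the corresponding $(2,4)$-configuration is contained via the injective maps $f\colon 1\mapsto i,\, 2\mapsto j$ and $g\colon a\mapsto a,\ldots,d\mapsto d$. The same argument handles the $(3,3)$-case. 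Conversely, if an election contains one of the listed configurations via $f$ and $g$, then the images under $g$ and the indices $f(1),f(2)$ (resp.\ $f(1),f(2),f(3)$) witness the partial-order condition of Theorem~\ref{thm:bal-hae-sp}, because every relation required by the partial order is already implied by the stronger total order of the configuration.

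There is no genuine obstacle here; the proof is essentially bookkeeping. The only step that requires care is making sure the enumeration of linearizations is exhaustive and that the translation between ``partial-order witness'' and ``injective configuration embedding'' is faithful in both directions, in particular that distinctness of the candidates $a,b,c,d$ (required by injectivity of $g$) is genuinely built into Theorem~\ref{thm:bal-hae-sp} and not an extra assumption.
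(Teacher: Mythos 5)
Your proposal is correct and takes essentially the same route as the paper, which justifies this theorem only with the one-line remark that it is obtained ``by considering all linearizations of the partial orders appearing in Theorem~\ref{thm:bal-hae-sp}''---precisely the enumeration you carry out (two linearizations per vote condition, giving $2\times 2=4$ and $2^3=8$ configurations). Your additional care about the faithfulness of the translation into Definition~\ref{def:configuration-containment}, including the observation that distinctness of the candidates and of the vote indices is already forced by the order conditions, is accurate and only makes explicit what the paper leaves implicit.
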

The first four configurations correspond to the first condition in Theorem~\ref{thm:bal-hae-sp}, the remaining eight correspond to the second condition.

\begin{definition}%
Let $\Gamma$ be a set of configurations.
A set of elections $\Pi$ is \emph{defined by $\Gamma$} if $\Pi$ consists exactly of those elections that avoid all configurations in $\Gamma$.
We call $\Pi$ \emph{configuration definable} if there exists a set of configurations $\Gamma$ which defines $\Pi$.
If $\Pi$ is definable by a finite set of configurations, it is called \emph{finitely configuration definable}.
\end{definition}
By Theorem~\ref{thm:sp-confs} we know that the set of all single-peaked elections is finitely configuration definable.
This is also true for the set of group-separable elections~\citep{bal-hae:j:characterization-single-peaked} and for the set of single-crossing elections~\citep{DBLP:journals/scw/BredereckCW13}.

We are now going to characterize which sets of elections are configuration definable.
In the following definition, for two elections $\cp$ and $(C',\mathcal{P}')$, we write $(C',\mathcal{P}') \mathbin{\sqsubseteq} (C,\mathcal{P})$ if $(C',\mathcal{P}')$, considered as a configuration, is contained in $\cp$.
Since every election can be seen as a configuration, the configuration containment relation immediately translates to election containment.

\begin{definition}%
A set of elections $\Pi$ is \emph{hereditary} if for every election $(C',\mathcal{P}')$ it holds that if there exists an election $(C,\mathcal{P})\in\Pi$ with $(C',\mathcal{P}')\sqsubseteq (C,\mathcal{P})$, then $(C',\mathcal{P}')\in \Pi$.%
\label{def:hereditary}
\end{definition}

\begin{proposition}
A set of elections is configuration definable if and only if it is hereditary.%
\label{prop:inf-conf-definable}
\end{proposition}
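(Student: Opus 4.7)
The plan is to prove both implications by taking direct routes, with the backward direction using the canonical choice of $\Gamma$ as the complement of $\Pi$.

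For the \emph{forward direction} (configuration definable implies hereditary), suppose $\Pi$ is defined by some set of configurations $\Gamma$, and let $(C,\mathcal{P})\in\Pi$ with $(C',\mathcal{P}')\sqsubseteq(C,\mathcal{P})$. The key auxiliary observation I would record here is that the containment relation $\sqsubseteq$ is transitive: given witnessing pairs of injections $(f,g)$ for $(S,\mathcal{T})\sqsubseteq(C',\mathcal{P}')$ and $(f',g')$ for $(C',\mathcal{P}')\sqsubseteq(C,\mathcal{P})$, the compositions $f'\circ f$ and $g'\circ g$ witness $(S,\mathcal{T})\sqsubseteq(C,\mathcal{P})$ (the order-preserving property along every index $i$ is preserved under composition). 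Given transitivity, if some $(S,\mathcal{T})\in\Gamma$ were contained in $(C',\mathcal{P}')$, it would also be contained in $(C,\mathcal{P})$, contradicting $(C,\mathcal{P})\in\Pi$. Hence $(C',\mathcal{P}')$ avoids every configuration in $\Gamma$, and so $(C',\mathcal{P}')\in\Pi$.

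For the \emph{backward direction} (hereditary implies configuration definable), I would let
\[
\Gamma \;=\; \{(C,\mathcal{P}) : (C,\mathcal{P}) \notin \Pi\},
\]
i.e., take the forbidden set to be exactly the set of non-elements of $\Pi$, viewed as configurations. I then verify that $\Gamma$ defines $\Pi$. If $(C,\mathcal{P})\in\Pi$ and some $(S,\mathcal{T})\in\Gamma$ is contained in $(C,\mathcal{P})$, then by heredity $(S,\mathcal{T})\in\Pi$, contradicting $(S,\mathcal{T})\in\Gamma$. Conversely, if $(C,\mathcal{P})$ avoids every configuration in $\Gamma$, then in particular it must avoid itself (since $(C,\mathcal{P})\sqsubseteq(C,\mathcal{P})$ via the identity injections), so $(C,\mathcal{P})\notin\Gamma$, i.e., $(C,\mathcal{P})\in\Pi$.

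The argument is essentially bookkeeping, so there is no serious obstacle; the only substantive point is the transitivity of $\sqsubseteq$ (used in both directions, explicitly in the forward direction and implicitly through self-containment in the backward direction), which I would state as a short preliminary observation before presenting the two implications. No care needs to be taken about finiteness of $\Gamma$, since the statement concerns configuration definability and not finite configuration definability.
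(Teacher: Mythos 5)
Your proof is correct and follows essentially the same route as the paper: the forward direction via (the contrapositive of) transitivity of $\sqsubseteq$, and the backward direction by taking $\Gamma$ to be the complement of $\Pi$ and using heredity plus self-containment. The only difference is that you make the transitivity of $\sqsubseteq$ explicit via composition of the witnessing injections, which the paper leaves implicit.
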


\begin{proof}
Let a set of elections  $\Pi$ be defined by a set of configurations $\Gamma$ and $(C,\mathcal{P})\in\Pi$.
Let $(C',\mathcal{P}')\sqsubseteq (C,\mathcal{P})$.
Since $(C,\mathcal{P})\in\Pi$, $(C,\mathcal{P})$ avoids all configurations in $\Gamma$.
Due to $(C',\mathcal{P}')\sqsubseteq (C,\mathcal{P})$, also $(C',\mathcal{P}')$ avoids all configurations in $\Gamma$ and is therefore contained in $\Pi$.

For the other direction,
let $\Pi$ be a hereditary set of elections.
We define $\Pi^c$ to be the \emph{complement of $\Pi$}, i.e., an election is contained in $\Pi^c$ if it is not contained in $\Pi$.
We claim that $\Pi^c$, considered as a set of configurations, defines $\Pi$.
If $(C,\mathcal{P})\in \Pi$, it avoids all $(C',\mathcal{P}')\in\Pi^c$ since if there existed a $(C',\mathcal{P}')\in\Pi^c$ with $(C',\mathcal{P}')\sqsubseteq (C,\mathcal{P})$, this would imply that $(C',\mathcal{P}')\in\Pi$.
It remains to show that if an election $(C'',\mathcal{P}'')$ avoids all $(C',\mathcal{P}')\in\Pi^c$, then $(C'',\mathcal{P}'')\in\Pi$.
This follows from noting that $(C'',\mathcal{P}'')$ avoiding all $(C',\mathcal{P}')\in\Pi^c$ implies $(C'',\mathcal{P}'')\notin \Pi^c$, which in turn implies $(C'',\mathcal{P}'')\in \Pi$.
\end{proof}
As a consequence of Proposition~\ref{prop:inf-conf-definable}, we know that 2D single-peaked \citep{barbera1993generalized} and 1D Euclidean elections \citep{theoryofdata,Knoblauch} are configuration definable.
However, Proposition~\ref{prop:inf-conf-definable} does not help to answer whether these restrictions are \emph{finitely} configuration definable.
For the 1D Euclidean domain it is even known that it is not finitely configuration definable~\citep{1d-infinte}.
Finite configuration definability has been crucial for establishing algorithmic results~\citep{bredereck-nearly,edith-approx}.

A natural example of a meaningful restriction that is not configuration definable is the set of all elections that have a Condorcet winner. The property of having a Condorcet winner is not hereditary and thus cannot be defined by configurations.
Another example is the ``single-peaked on a tree'' restriction~\citep{dem:j:sp-tree}.

\subsection{The connection to permutation patterns}

In this section, we establish a strong link between the concept of configuration containment and the concept of pattern containment in permutations.
Pattern containment in permutations is defined as follows.

\begin{definition}
A $k$-permutation $\pi$ is \emph{contained as a pattern} in an $m$-permu\-tation $\tau$ if there is a subsequence of $\tau$ that is order-isomorphic to $\pi$.
In other words, $\pi$ is contained in $\tau$, if there is a strictly increasing map $\mu: \{1,\ldots,k\} \rightarrow \{1,\ldots,m\}$ so that the sequence $\mu(\pi)=\big(\mu(\pi(1)),\mu(\pi(2)),\ldots,\mu(\pi(k))\big)$ is a subsequence of $\tau$.
This map $\mu$ is called a \emph{matching} of $\pi$ into $\tau$.
If there is no such matching, \emph{$\tau$ avoids the pattern $\pi$}.%
\label{def:pattern-containment}
\end{definition}

For example, the pattern $\pi=132$ is contained in $\tau=32514$ since the subsequence $254$ of $\tau$ is order-isomorphic to $\pi$.
However, the pattern $123$ is avoided by $\tau$. 
Note that $\tau$ contains $\pi$ if and only if $\tau ^{-1}$ contains $\pi ^{-1}$.

We are going to prove two lemmas.
The first lemma (Lemma~\ref{lem:confavoiding<->pp}) states that every permutation pattern matching query can naturally be translated into a configuration containment query.
The second lemma (Lemma~\ref{lem:enumeration_confavoiding<->pp}) states that for $(2,k)$-configurations, a configuration containment query can naturally be translated in a permutation pattern query.

\begin{lemma}
Let $\pi$ be a $k$-permutation and $\tau$  an $m$-permu\-tation.
We define the corresponding configuration and election as follows:
Let $\cp$ be a $(3,m)$-election with $C=\{c_1,\ldots,c_m\}$, $\mathcal{P}=(V_1, V_2, V_3)$ and
\begin{align*}
 V_1:c_1 c_2  \cdots  c_m \qquad
 V_2:c_1 c_2  \cdots  c_m \qquad
 V_3:c_{\tau(1)}  c_{\tau(2)}  \cdots  c_{\tau(m)}.
\end{align*}
Furthermore, let $(S,\calT)$ be a $(3,k)$-configuration with $S=\{x_1,\ldots,x_k\}$, $\mathcal{T}=(T_1, T_2, T_3)$ and 
\begin{align*}
T_1:x_1 x_2  \cdots  x_k \qquad T_2:x_1 x_2  \cdots  x_k \qquad T_3:x_{\pi(1)} x_{\pi(2)} \cdots x_{\pi(k)}.
\end{align*}
Then $\cp$ contains $(S,\calT)$ if and only if $\tau$ contains $\pi$.%
\label{lem:confavoiding<->pp}
\end{lemma}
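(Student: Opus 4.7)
The plan is to exploit the bijective correspondence $g(x_i) = c_{\mu(i)}$ between a configuration-containment witness $g$ and a potential matching $\mu : [k]\to[m]$ of $\pi$ into $\tau$. Under this correspondence, each order constraint ``$V_j : c_a\, c_b$'' translates into an inequality on $\mu$ when $V_j$ is one of the identity votes $V_1,V_2$, and into an inequality on $\tau^{-1}\circ\mu$ when $V_j = V_3$.

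For the ``if'' direction, I take a strictly increasing matching $\mu$ witnessing that $\tau$ contains $\pi$, set $g(x_i)=c_{\mu(i)}$, and let $f$ be the identity on $[3]$. Checking Definition~\ref{def:configuration-containment} is then direct: for $i\in\{1,2\}$ the condition $T_i : x_a x_b$ (with $a<b$) demands $V_i : c_{\mu(a)} c_{\mu(b)}$, which holds because $\mu$ is increasing and $V_i$ is the identity; for $i=3$ the condition reduces to $\tau^{-1}(\mu(\pi(a))) < \tau^{-1}(\mu(\pi(b)))$ whenever $a<b$, which is precisely the statement that $(\mu(\pi(1)),\dots,\mu(\pi(k)))$ is a subsequence of $\tau$.

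For the ``only if'' direction, suppose $f:[3]\to[3]$ and $g:S\to C$ witness containment, and define $\mu$ by $g(x_i)=c_{\mu(i)}$; note that $f$ is necessarily a bijection. The main case is $f(3)=3$: then $f$ permutes $\{1,2\}$, so both $T_1$ and $T_2$ are matched against an identity vote, forcing $\mu$ to be strictly increasing, while $T_3$ matched against $V_3$ forces $\tau^{-1}\circ\mu\circ\pi$ to be strictly increasing. Together, this exhibits $\mu$ as a matching of $\pi$ into $\tau$.

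The only subtlety is the residual case $f(3)\in\{1,2\}$, which I expect to be the one point requiring a little care. Here exactly one of $T_1,T_2$ is matched against $V_3$, giving $\tau^{-1}\circ\mu$ increasing, while $T_3$ is matched against an identity vote, giving $\mu\circ\pi$ increasing; the remaining element of $\{T_1,T_2\}$ yields $\mu$ itself increasing. Since $\mu$ is injective and increasing, $\mu\circ\pi$ increasing forces $\pi$ to be the identity permutation; but then $\mu$ together with $\tau^{-1}\circ\mu$ increasing is precisely an increasing subsequence of length $k$ in $\tau$, which witnesses containment of the identity $k$-pattern, i.e.\ of $\pi$. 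Both directions together prove the equivalence.
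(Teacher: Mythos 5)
Your proof is correct and follows essentially the same route as the paper's: the identification $g(x_i)=c_{\mu(i)}$ with $f=\mathrm{id}$ for the forward direction, and a reduction of the backward direction to the case $f(3)=3$. Your explicit treatment of the residual case $f(3)\in\{1,2\}$ is in fact more careful than the paper's, which dismisses it with a brief (and slightly imprecise) ``unless $V_1=V_2=V_3$'' remark; your observation that this case forces $\pi$ to be the identity while $\mu$ still exhibits an increasing length-$k$ subsequence of $\tau$ cleanly closes that gap.
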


\begin{proof}
Assume that we have a matching $\mu$ from $\pi$ into $\tau$.
We have to find an injective function $f$ from $\{1,2,3\}$ into $\{1,2,3\}$
and an injective function $g$ from $S$ into $C$
such that, for any $x,y\in S$ and $i\in \{1,2,3\}$, it holds that
$T_i: xy$ implies $V_{f(i)}: g(x)g(y)$.
Let $f$ be the function $\{1\mapsto 1, 2\mapsto 2, 3\mapsto 3\}$ and $g=\mu$.
It holds for $x_i,x_j \in S$ that $T_1: x_i\,x_j$ if and only if $V_1: c_{\mu(i)}\, c_{\mu(j)}$ since $\mu$ is monotone.
The same holds for $T_2$ and $V_2$.
For $T_3$ and $V_3$ observe that $T_3: x_i\,x_j$ implies $V_3: c_{\mu(i)}\, c_{\mu(j)}$ since $\mu$ is a matching.
Thus, the election fulfils $(S,\calT)\sqsubseteq \cp$.

For the other direction, assume that $\cp$ contains $(S,\calT)$.
Consequently, there exists an injective function $f$ from $\{1,2,3\}$ into $\{1,2,3\}$
and an injective function $g$ from $S$ into $C$
such that, for any $x,y\in S$ and $i\in \{1,2,3\}$, it holds that
$T_i: xy$ implies $V_{f(i)}: g(x)g(y)$.
First, we claim that $f(3)=3$.
Observe that $f$ has to map $T_1$ and $T_2$ to identical total orders.
Thus, unless $V_1=V_2=V_3$, $f(3)=3$.
In the case that $V_1=V_2=V_3$, we can assume without loss of generality that $f(3)=3$.
We will construct a function $\mu$ and show that $\mu$ is a matching from $\pi$ into $\tau$.
Let us define $\mu(i)=j$ if $g(x_i)=c_j$.
Observe that $\mu$ is strictly increasing since for $i<j$, $V_1: c_{g(i)}\,c_{g(j)}$ and $V_1:c_1 c_2  \cdots  c_m$.
In addition, $\mu(\pi)=\big(\mu(\pi(1)),\mu(\pi(2)),\ldots,\mu(\pi(k))\big)$ is a subsequence of $\tau$ since, by definition of $T_3$ and $V_3$ and the fact that $f(3)=3$, $\big(g(x_{\pi(1)}),g(x_{\pi(2)}),\ldots,g(x_{\pi(k)})\big)$ is a subsequence of $\big(c_{\tau(1)},c_{\tau(2)},\ldots,c_{\tau(m)}\big)$.
\end{proof}

Next, we will prove the second lemma, which is essential for the main theorem of this section (Theorem~\ref{thm:SW-bound}).
As of now, we shall denote by $S_m(\pi_1,\ldots,\pi_l)$ the cardinality of the set of $m$-permutations that avoid the patterns $\pi_1,\ldots,\pi_l$. 

\begin{lemma}
Let $(S,\calT)$ be a $(2,k)$-configuration with $\calT=(T_1,T_2)$.
Furthermore, let $V_1$ be a total order on the candidate set $C=\{c_1,\ldots,c_m\}$.
Then the number of total orders $V_2$ such that the election $\cp$ with $\mathcal{P}=( V_1, V_2 )$ avoids $(S,\calT)$ is equal to $S_m(\pi, \pi ^{-1})$, where $\pi=p(T_1,T_2)$.%
\label{lem:enumeration_confavoiding<->pp}
\end{lemma}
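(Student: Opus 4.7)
My plan is to establish a bijection between total orders $V_2$ on $C$ and $m$-permutations, and then show that the election $\cp$ avoids $(S,\calT)$ if and only if the associated permutation $\tau$ avoids both $\pi$ and $\pi^{-1}$ as patterns. The bijection is the natural one: once $V_1$ is fixed, the map $V_2 \mapsto \tau := p(V_1,V_2)$ is a bijection from total orders on $C$ onto the set of all $m$-permutations, since reading off the positions of $V_1$'s entries inside $V_2$ produces an arbitrary permutation.

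The key structural step is to analyze the containment $(S,\calT) \sqsubseteq \cp$. Since the witnessing function $f$ must be injective from $[2]$ to $[2]$, there are exactly two cases: either $f$ is the identity or $f$ is the swap. I would treat these separately and show that each case translates into a pattern-containment statement for $\tau$. Specifically, in the identity case, an embedding $g: S \to C$ witnessing containment picks out a $k$-element subset $g(S) \subseteq C$ whose restriction of $V_1$ is order-isomorphic to $T_1$ (automatic for any $k$-subset, since $V_1$ is a total order) and whose restriction of $V_2$ is order-isomorphic to $T_2$. Expressing ``restriction of $V_1$ is $T_1$'' and ``restriction of $V_2$ is $T_2$'' in terms of ranks, the relative permutation from $V_1|_{g(S)}$ to $V_2|_{g(S)}$ must be precisely $p(T_1,T_2) = \pi$. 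By the standard definition of pattern containment (Definition~\ref{def:pattern-containment}), this happens if and only if $\tau = p(V_1,V_2)$ contains $\pi$.

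The swap case is handled symmetrically: an embedding $g$ must make the restriction of $V_2$ order-isomorphic to $T_1$ and the restriction of $V_1$ order-isomorphic to $T_2$, so the relative permutation from $V_2|_{g(S)}$ to $V_1|_{g(S)}$ must equal $\pi$. Equivalently, $p(V_2,V_1) = \tau^{-1}$ contains $\pi$, which by the observation just after Definition~\ref{def:pattern-containment} is equivalent to $\tau$ containing $\pi^{-1}$. Combining both cases, $\cp$ avoids $(S,\calT)$ if and only if $\tau$ avoids both $\pi$ and $\pi^{-1}$, and so the number of admissible $V_2$'s equals $S_m(\pi,\pi^{-1})$.

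The main obstacle is bookkeeping: one must carefully translate between the two descriptions of a permutation (the ranking function $p(\cdot,\cdot)$ versus the strictly increasing matching $\mu$ of Definition~\ref{def:pattern-containment}), and verify that when $g$ is extended to refer to rank indices in $T_i$ and $V_i$, the induced map on positions is exactly such a strictly increasing matching. This is analogous to the argument already carried out in the proof of Lemma~\ref{lem:confavoiding<->pp}, so I would appeal to that style of reasoning rather than redoing the index chase from scratch.
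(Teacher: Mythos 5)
Your proposal is correct and follows essentially the same route as the paper's proof: both fix the bijection $V_2 \mapsto p(V_1,V_2)$, split the containment analysis into the two cases for the injective map $f$ (identity versus swap), and show these correspond exactly to $p(V_1,V_2)$ containing $\pi$ respectively $\pi^{-1}$, yielding the count $S_m(\pi,\pi^{-1})$. The only cosmetic difference is that you phrase the translation via order-isomorphic restrictions and relative permutations, while the paper works directly with an explicit matching $\mu$ and sets $g=\mu$; the substance is identical.
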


\begin{proof}
Let us start by proving the following statement: 
The configuration $(S,\calT)$ is contained in an election $\cp$ with $\mathcal{P}=(V_1,V_2)$ if and only if the permutation $\pi$ or the permutation $\pi^{-1}$ is contained in $p(V_1,V_2)$.
In order to alleviate notation, we will assume in the following that $C=\{1, 2, \ldots m\}$ and $S=\{1, 2, \ldots k\}$.

Let $\mu$ be a matching witnessing that $\pi$ is contained in $p(V_1,V_2)$.
We can assume without loss of generality that $T_1: 12\ldots k$ and $V_1:12 \ldots m$.
Then the functions $f=\{1\mapsto 1, 2\mapsto 2\}$ and $g=\mu$ show that $(S,\calT)\sqsubseteq\cp$ (cf.\ Definition~\ref{def:configuration-containment}).
If $\pi^{-1}$ is contained in $p(V_1,V_2)$ as witnessed by a matching $\mu$, then the functions $f=\{1\mapsto 2, 2\mapsto 1\}$ and $g=\mu$ show that $(S,\calT)\sqsubseteq\cp$.

For the other direction, let $(S,\calT)\sqsubseteq\cp$.
Without loss of generality we assume that $T_1:12 \ldots k$.
Note that renaming $C$ does not change whether $(S,\calT)\sqsubseteq\cp$.
Thus, it is safe to rename the candidates according to the $f$ function:
If $f=\{1\mapsto 1, 2\mapsto 2\}$, let $V_1:12 \ldots n$.
Since $f(1)=1$, $g$ is monotonic.
It is easy to verify that $g$ is a matching from $\pi$ into $p(V_1,V_2)$.
If $f=\{1\mapsto 2, 2\mapsto 1\}$, 
let $V_2:12 \ldots n$.
Now, $g$ is a matching from $\pi$ into $p(V_2,V_1)=(p(V_1,V_2))^{-1}$.
This is equivalent to $g$ being a matching from $\pi^{-1}$ into $p(V_1,V_2)$.

It follows that $\cp$ avoids the configuration $(S,\calT)$ if and only if the permutation $p(V_1,V_2)$ avoids both the patterns $\pi$ and $\pi^{-1}$.
Moreover, for the fixed total order $V_1$ and a fixed $m$-permutation $\tau$, there is a single total order $V_2$ such that $p(V_1,V_2)=\tau$.
Thus the number of votes $V_2$ such that $p(V_1,V_2)$ avoids $\pi$ and $\pi ^{-1}$ (and equivalently the number of votes $V_2$ such that $\cp$ avoids $(S,\calT)$) is equal to $S_m(\pi, \pi ^{-1})$, the number of $m$-permutations avoiding $\pi$ and $\pi ^{-1}$.
\end{proof}
From this lemma follows the main theorem of this section that is applicable to any set of configurations that contains at least one configuration of cardinality two.

\subsection{Elections that avoid a $(2,k)$-configuration}

With the help of Lemma~\ref{lem:enumeration_confavoiding<->pp}, we are able to establish the following result.

\begin{theorem}
Let $a(n,m,\Gamma)$ be the number of $(n,m)$-elections avoiding a set of configurations $\Gamma$.
Let $k\geq 2$.
If a set of configurations $\Gamma$ contains a $(2,k)$-configuration, then it holds for all $n, m \in \mathbb{N}$ that
\[a(n,m,\Gamma) \leq m!\cdot c_k^{(n-1)m},\]
where $c_k$ is a constant depending only on $k$.%
\label{thm:SW-bound}
\end{theorem}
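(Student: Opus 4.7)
The strategy is to pick one $(2,k)$-configuration $(S,\calT) \in \Gamma$ and use it to control the number of admissible votes. Since avoiding all of $\Gamma$ is at least as restrictive as avoiding this single configuration, I would upper bound $a(n,m,\Gamma)$ by the number of $(n,m)$-elections avoiding just $(S,\calT)$.

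The plan is to first choose $V_1$ freely (at most $m!$ possibilities) and then bound, for each $i\in\{2,\ldots,n\}$ independently, the number of legal votes $V_i$. The key observation is that whenever $\cp$ avoids $(S,\calT)$, then for every pair of indices $i\neq j$ the two-vote subelection $(V_i,V_j)$ must also avoid $(S,\calT)$ — this is immediate from Definition~\ref{def:configuration-containment}, because an injection $f$ matching $(S,\calT)$ into $(V_i,V_j)$ would also match it into $\cp$. In particular, for each $i\geq 2$ the two-vote election with vote tuple $(V_1,V_i)$ must avoid $(S,\calT)$.

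Applying Lemma~\ref{lem:enumeration_confavoiding<->pp} with the fixed $V_1$ and $\pi=p(T_1,T_2)$, the number of candidate choices for $V_i$ is exactly $S_m(\pi,\pi^{-1}) \leq S_m(\pi)$. The Stanley–Wilf conjecture, proven by Marcus and Tardos, guarantees that $S_m(\pi)\leq c_\pi^{m}$ for some constant $c_\pi$ depending only on $\pi$. Since $\pi$ is a $k$-permutation and there are only finitely many (namely $k!$) such permutations, we may define
\[
c_k := \max\bigl\{\, c_\pi : \pi \text{ is a } k\text{-permutation} \,\bigr\},
\]
which is a constant depending only on $k$. Hence each $V_i$ with $i\geq 2$ has at most $c_k^{m}$ choices.

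Multiplying the $m!$ choices for $V_1$ by at most $c_k^{m}$ choices for each of the $n-1$ remaining votes yields
\[
a(n,m,\Gamma) \;\leq\; m! \cdot \bigl(c_k^{m}\bigr)^{n-1} \;=\; m!\cdot c_k^{(n-1)m},
\]
as required. The only non-trivial ingredient is the Marcus–Tardos theorem; everything else is bookkeeping around Lemma~\ref{lem:enumeration_confavoiding<->pp}. The main subtlety to check is that the constraint from a single pair $(V_1, V_i)$ is enough — we deliberately discard the additional constraints coming from pairs $(V_i, V_j)$ with $i,j\geq 2$, which only strengthens the inequality.
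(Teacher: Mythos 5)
Your proof is correct and follows essentially the same route as the paper's: fix $V_1$ ($m!$ choices), relax to the pairwise constraint with $V_1$ only, invoke Lemma~\ref{lem:enumeration_confavoiding<->pp} to count admissible $V_i$ by $S_m(\pi,\pi^{-1})\leq S_m(\pi)$, and finish with the Marcus--Tardos theorem. Your explicit definition of $c_k$ as a maximum over the finitely many $k$-permutations is a small extra care the paper glosses over (its cited bound $15^{2k^4\binom{k^2}{k}}$ already depends only on $k$), but the argument is the same.
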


This result shows that forbidding any $(2,k)$-configuration is a very strong restriction.
Indeed, $m!\cdot c_k^{(n-1)m}$ is very small compared to the total number of $(n,m)$-elections which is $(m!)^n$.
This result allows us to bound the number of single-peaked and group-separable elections.
However, let us prove this result first before we explore its consequences. 

In order to prove this result we make use of the link between configuration avoiding elections and pattern avoiding permutations established in Lemma~\ref{lem:enumeration_confavoiding<->pp} and profit from a very strong result within the theory of pattern avoidance in permutations, the Marcus-Tardos theorem (former Stanley-Wilf conjecture).

\begin{proof}%
We are going to provide an upper bound on the number of $(n,m)$-elections avoiding a $(2,k)$-configuration $(S,\calT)$ with $\calT=(T_1,T_2)$.
Let us start by choosing the first vote $V_1$ of the election at random.
For this there are $m!$ possibilities.
When choosing the remaining $(n-1)$ votes $V_2, \ldots, V_n$, we have to make sure that no selection of two votes contains the forbidden configuration $(S,\calT)$.
We relax this condition and only demand that for none of the pairs $(V_1, V_i)$ with $i \neq 1$, the election $(C,(V_1,V_i))$ contains the forbidden configuration.
Hereby we obtain an upper bound for $a(n,m,\left\lbrace (S,\calT)\right\rbrace)$.
Now Lemma~\ref{lem:enumeration_confavoiding<->pp} tells us that there are---under this relaxed condition---$S_m(\pi, \pi ^{-1})$ choices for every $V_i$ where $\pi = p(T_1,T_2)$. 
Thus we have the following upper bound:
\begin{align}
a(n,m,\{ (S,\calT) \}) \leq m!  S_m(\pi, \pi ^{-1})^{n-1}  \leq m! S_m(\pi)^{n-1},
\label{eqn:SW-inequality}
\end{align}
where the second inequality follows since all permutations avoiding both $\pi$ and $\pi ^{-1}$ clearly avoid $\pi$.

Now we apply the famous Marcus-Tardos theorem~\citep{marcus2004excluded}: For every permutation $\pi$ of length $k$ there exists a constant $c_k$ such that for all positive integers $m$ we have $S_m(\pi) \leq {c_k}^m$.
Putting this together with Equation~\eqref{eqn:SW-inequality} and noting that $a(n,m,\left\lbrace (S,\calT) \right\rbrace )$ is an upper bound for $a(n,m,\Gamma)$,  we obtain the desired upper bound.
\end{proof}
The proof of the Marcus-Tardos theorem provides an explicit exponential formula for the constants $c_k$.
Indeed, it holds that  \[
S_m(\pi) \leq \left( 15^{2k^4 \binom{k^2}{k}}\right)^m.
\] 
These constants are however far from being optimal and there is an ongoing effort to find minimal values for $c_k$ with fixed $k$.
In particular it has been shown that $c_2=1$, $c_3=4$
\citep{simion1985restricted} and $c_4 \leq 13.738$ \citep{bona2014new}. 

Let us discuss the implications of this theorem.
It is applicable to all (not necessarily finite) configuration definable domain restrictions that contain a configuration of cardinality two.
In particular, we obtain the following upper bounds for  single-peaked and  group-separable elections.

\begin{corollary}
Let $a(n,m,\Gamma_{sp})$ denote the number of  single-peaked $(n,m)$-elections.
For $n,m \geq 2$ it holds that $a(n,m, \Gamma_{sp}) \leq m!\cdot 4^{(m-1)(n-1)}$.%
\label{cor:count-single-peaked}
\end{corollary}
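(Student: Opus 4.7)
The plan is to adapt the proof of Theorem~\ref{thm:SW-bound}, but to replace the Marcus--Tardos bound on $S_m(\pi)$ by a direct count that exploits the explicit structure of single-peakedness captured in Remark~\ref{rem:number-of-sp-votes}; this is what yields the sharper constant $4$ in place of the generic $c_4$.

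Following the template of Theorem~\ref{thm:SW-bound}, I would first fix the reference vote $V_1$ in one of $m!$ ways. Because single-peakedness is hereditary, any single-peaked $(n,m)$-election satisfies that each two-vote subelection $(C,(V_1,V_i))$ is itself single-peaked. The $(3,3)$-configurations in Theorem~\ref{thm:sp-confs} cannot appear in a two-vote election, so such a subelection is single-peaked precisely when it avoids the four $(2,4)$-configurations of Theorem~\ref{thm:sp-confs}. Letting $N$ denote the number of total orders $V$ on $C$ such that $(C,(V_1,V))$ is single-peaked, we obtain
\[
a(n,m,\Gamma_{sp}) \;\leq\; m!\cdot N^{n-1}.
\]

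The heart of the argument is the bound $N \leq 4^{m-1}$, which I would establish by double counting pairs $(A,V)$ consisting of an axis $A$ of $C$ and a total order $V$ on $C$ such that both $V_1$ and $V$ are single-peaked with respect to $A$. A symmetric rerun of the argument in Remark~\ref{rem:number-of-sp-votes}---processing the candidates from the bottom of $V_1$ upward and placing each new candidate at one of the two extremal positions of the as-yet-unfilled portion of the axis---shows that there are exactly $2^{m-1}$ axes with respect to which $V_1$ is single-peaked. For each such axis, Remark~\ref{rem:number-of-sp-votes} produces exactly $2^{m-1}$ single-peaked votes $V$. Hence the number of admissible pairs $(A,V)$ is at most $2^{m-1}\cdot 2^{m-1} = 4^{m-1}$. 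Since every $V$ counted by $N$ contributes at least one such pair (take any axis witnessing the single-peakedness of $(C,(V_1,V))$), we conclude $N \leq 4^{m-1}$, and combining this with the preceding estimate gives $a(n,m,\Gamma_{sp}) \leq m!\cdot 4^{(m-1)(n-1)}$.

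The only step requiring genuine care is the claim that $V_1$ is single-peaked with respect to exactly $2^{m-1}$ axes: Remark~\ref{rem:number-of-sp-votes} is stated in the dual ``fix the axis and count the votes'' direction, and one needs to spell out explicitly that the same inductive construction goes through with the roles of axis and vote exchanged. This is a minor obstacle rather than a substantive one, but without it the chain of inequalities above has a gap.
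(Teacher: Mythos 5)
Your proof is correct, but the key counting step is genuinely different from the paper's. Both arguments share the outer skeleton: fix $V_1$ in one of $m!$ ways, use heredity to relax single-peakedness to the condition that each pair $(V_1,V_i)$ is single-peaked, and conclude $a(n,m,\Gamma_{sp})\leq m!\cdot N^{n-1}$ where $N$ is the number of votes compatible with $V_1$. The paper then bounds $N$ through the permutation-pattern machinery of Lemma~\ref{lem:enumeration_confavoiding<->pp}: it identifies $N$ with $S_m(1432,4132,2431,4231)$, invokes the known exact enumeration $\binom{2m-2}{m-1}$ of this pattern class due to \cite{guibert1995permutations}, and only then bounds $\binom{2m-2}{m-1}\leq 4^{m-1}$. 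You instead bound $N$ by an elementary double count over pairs $(A,V)$, using that a fixed vote is single-peaked with respect to exactly $2^{m-1}$ axes (the dual of Remark~\ref{rem:number-of-sp-votes}) and that each axis admits $2^{m-1}$ single-peaked votes. The step you flag as needing care is indeed true and the dual inductive construction goes through (process $V_1$ from the bottom; the currently lowest-ranked candidate must occupy an extremal position of the axis restricted to the not-yet-placed candidates, giving two choices at each of $m-1$ stages); moreover, for the upper bound only ``at most $2^{m-1}$ axes'' is needed, so this is not a real gap. The trade-off: your route is self-contained and avoids both the configuration-to-pattern translation and the external enumeration result, whereas the paper's route delivers the exact value $N=\binom{2m-2}{m-1}$, which is sharper by a factor of order $\sqrt{m}$ and is reused elsewhere (Theorem~\ref{thm:SP_small_profiles}(i.) and Proposition~\ref{thm:iac}(ii.)); for the corollary as stated, both suffice.
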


\begin{proof}
We know from Theorem~\ref{thm:bal-hae-sp} that the single-peaked domain avoids the $(2,4)$-configurations $(dabc,dcba)$,
$(adbc,dcba)$,
$(dabc,cdba)$ and  
$(adbc,cdba)$.
We  can use Equation~\eqref{eqn:SW-inequality} in the proof of Theorem~\ref{thm:SW-bound} to bound $a(n,m,\Gamma_{sp})$.
For this, we have to compute the permutations and their inverses corresponding to the four configurations.
We obtain the permutations $\pi_1=p(dabc,dcba)=1432$, $\pi_2=p(adbc,dcba)=4132$, $\pi_3=p(dabc,cdba)=2431$ and $\pi_4=p(adbc,cdba)=4231$.
Their inverses are $\pi_1^{-1}=\pi_1$, $\pi_2^{-1}=\pi_3$, $\pi_3^{-1}=\pi_2$ and $\pi_4^{-1}=\pi_4$.
Hence it holds that the number of  $(n,m)$-elections that avoid these four configurations is bounded by
$m! \cdot S_m(\pi_1, \pi_2, \pi_3, \pi_4)^{n-1}$.
The enumeration problem for this permutation class has been solved by \cite{guibert1995permutations} in his PhD thesis with the help of the method of generating trees.
A more direct and combinatorial approach to this permutation class can be found in the first author's PhD thesis \citep{bruner2015patterns}.
It holds that $S_m(\pi_1, \pi_2, \pi_3, \pi_4)=\binom{2m-2}{m-1}$, which, in turn, is bounded by $4^{m-1}$.
\end{proof}
This upper bound also holds for the 1D Euclidean domain \citep{theoryofdata,Knoblauch}, since this domain is a subset of the single-peaked domain.
In the next section, we will see that the growth rate of $a(n,m, \Gamma_{sp})$ is indeed of the form $m!\cdot c^{(m-1)(n-1)}$ for some constant $c$.
However, the constant found in Corollary~\ref{cor:count-single-peaked} is not optimal as we will see by providing a better bound for the single-peaked restriction that is even asymptotically optimal.

As another corollary of Theorem~\ref{thm:SW-bound}, we prove a bound on the number of  group-separable elections.
An election is group separable if for every subset of candidates $C'$ there exists a partition $C_1,C_2$ of $C'$ such that in every vote either all candidates in $C_1$ are preferred to all candidates in $C_2$ or vice versa.
\citet{bal-hae:j:characterization-single-peaked} showed that the group-separable domain is finitely configuration definable. In particular, this domain avoids the configuration $(abcd,bdac)$.
Therefore, Theorem~\ref{thm:SW-bound} is applicable.

\begin{corollary}
Let $a(n,m,\Gamma_{gs})$ denote the number of  group-separable $(n,m)$-elections.
For $n,m \geq 2$ it holds that $a(n,m, \Gamma_{gs}) \leq m!\cdot (3 +2\sqrt{2})^{m(n-1)}$.%
\label{cor:count-group-separable}
\end{corollary}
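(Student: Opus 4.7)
The plan is to mimic the proof of Corollary~\ref{cor:count-single-peaked} almost verbatim, replacing the four single-peaked forbidden configurations by the single group-separable forbidden configuration $(abcd,bdac)$ established by \citet{bal-hae:j:characterization-single-peaked}. Since $(abcd,bdac)$ is a $(2,4)$-configuration, Inequality~\eqref{eqn:SW-inequality} from the proof of Theorem~\ref{thm:SW-bound} directly gives
\[
a(n,m,\Gamma_{gs}) \leq m! \cdot S_m(\pi,\pi^{-1})^{n-1},
\]
where $\pi = p(abcd,bdac)$. So the whole statement reduces to bounding $S_m(\pi,\pi^{-1})$ by $(3+2\sqrt{2})^{m}$.

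First I would compute $\pi$. Reading off which position in $bdac$ is occupied by the $i$-th element of $abcd$ gives $\pi(1)=3,\pi(2)=1,\pi(3)=4,\pi(4)=2$, i.e.\ $\pi=3142$, and consequently $\pi^{-1}=2413$. So the quantity to bound is $S_m(3142,2413)$.

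The key enumerative input is the classical fact that the permutations avoiding simultaneously the patterns $3142$ and $2413$ are exactly the \emph{separable permutations}, and that their number is given by the large Schr\"oder numbers $s_{m-1}$ (see e.g.\ \citealp{bruner2015patterns}). The ordinary generating function of the sequence $(s_m)_{m\geq 0}$ is $\tfrac{1-x-\sqrt{1-6x+x^2}}{2x}$, whose closest singularity to the origin lies at $x=3-2\sqrt{2}$, so the exponential growth rate of the Schr\"oder numbers is $3+2\sqrt{2}$. Combining this with a check at small indices yields the clean uniform estimate $S_m(3142,2413) = s_{m-1} \leq (3+2\sqrt{2})^{m}$ for all $m\geq 1$. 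Substituting into the displayed inequality above gives
\[
a(n,m,\Gamma_{gs}) \leq m! \cdot (3+2\sqrt{2})^{m(n-1)},
\]
as required.

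The main obstacle is the enumerative identification: one needs the (nonobvious but well-known) equivalence between $\{3142,2413\}$-avoiders and separable permutations, together with the extraction of a clean, uniform-in-$m$ upper bound from the Schr\"oder generating function rather than merely an asymptotic one. Once those facts are cited, the remainder is a routine substitution of the same shape as in the proof of Corollary~\ref{cor:count-single-peaked}.
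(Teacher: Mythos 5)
Your proposal is correct and follows essentially the same route as the paper: apply Inequality~\eqref{eqn:SW-inequality} to the forbidden configuration $(abcd,bdac)$, compute $\pi=3142$ and $\pi^{-1}=2413$, and invoke the enumeration of separable permutations by the large Schr\"oder numbers with the bound $S_m(3142,2413)\leq(3+2\sqrt{2})^m$ (which the paper simply cites from \citealp{West1995247} rather than re-deriving from the generating function). The only cosmetic difference is your generating-function justification of the uniform bound, which is slightly informal but salvageable and does not change the argument.
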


\begin{proof}
The proof is similar to the one of Corollary~\ref{cor:count-single-peaked}.
We use Equation~\eqref{eqn:SW-inequality} in the proof of Theorem~\ref{thm:SW-bound} to bound $a(n,m,\Gamma_{gs})$, i.e.,
$a(n,m,\Gamma_{gs})\leq m! \cdot S_m(\pi, \pi ^{-1})^{n-1}$, where $\pi = p(abcd,bdac) = 3142$
and $\pi ^{-1} = 2413$.
Permutations avoiding these two patterns are known under the name of \emph{separable permutations}.
It is known that separable permutations are counted by the large Schr\"oder numbers (OEIS A006318) and that
$S_m(\pi, \pi ^{-1}) \leq (3 +2\sqrt{2})^m$ \citep{West1995247}.
\end{proof}

\section{Counting results and the Impartial Culture assumption}
\label{sec:IC}

As in the previous section, let $a(n,m, \GammaSP)$ denote the number of  single-peaked elections.
In this section, we prove a lower and upper bound on $a(n,m, \GammaSP)$.
These two bounds are asymptotically optimal, i.e., the lower bound converges to the upper bound for every fixed $m$ and $n\rightarrow \infty$.
In addition, we prove exact enumeration results for $a(2,m, \GammaSP)$, $a(n,3, \GammaSP)$ and $a(n,4, \GammaSP)$.

Our results immediately imply bounds on the probability that an $(n,m)$-election is single-peaked assuming that elections are drawn uniformly at random, i.e., according to the Impartial Culture assumption.
The probability is simply $a(n,m, \GammaSP)/(m!)^n$.

\begin{theorem}
It holds that 
\begin{align*}
\frac{m!}{2}\cdot 2^{(m-1)\cdot n} \cdot (1-\epsilon(n,m))\leq a(n,m, \GammaSP)\leq \frac{m!}{2}\cdot 2^{(m-1)\cdot n},
\end{align*}
 where $\epsilon(n,m) \rightarrow 0$ for every fixed $m$ and $n \rightarrow \infty$.
\label{thm:count-sp}
\end{theorem}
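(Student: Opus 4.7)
The plan is to treat the bounds independently: the upper bound by a union bound over axes, and the lower bound by Bonferroni's inequality.

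For the upper bound, observe that a vote is single-peaked with respect to an axis $A$ if and only if it is single-peaked with respect to the reversed axis $A^R$, so the $m!$ axes fall into $m!/2$ equivalence classes $[A]=\{A,A^R\}$, each determining the same set $N(A)$ of elections single-peaked with respect to $A$. By Remark~\ref{rem:number-of-sp-votes}, $|N(A)| = (2^{m-1})^n = 2^{(m-1)n}$. A union bound over the $m!/2$ classes then yields $a(n,m,\GammaSP) = |\bigcup_{[A]} N(A)| \leq (m!/2)\cdot 2^{(m-1)n}$.

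For the lower bound, I apply Bonferroni's inequality to the same union:
\[
a(n,m,\GammaSP) \;=\; \Big|\bigcup_{[A]} N(A)\Big| \;\geq\; \frac{m!}{2}\cdot 2^{(m-1)n} \;-\; \sum_{[A]\neq [A']} f(A,A')^{n},
\]
where $f(A,A')$ denotes the number of votes single-peaked with respect to both $A$ and $A'$. The crux is the claim that $f(A,A')\leq 2^{m-1}-1$ whenever $[A]\neq [A']$, which I would establish by showing that the set of votes single-peaked with respect to $A$ determines $A$ up to reversal. The two endpoints of $A$ are exactly the candidates that occur at the last (least-preferred) position of some single-peaked vote with respect to $A$, since constructing such a vote (cf.\ Remark~\ref{rem:number-of-sp-votes}) forces the least-preferred candidate to be one of the two current axis endpoints; and one can then recover the remaining positions of the axis recursively by peeling off the least-preferred candidate and iterating on $C\setminus\{a\}$. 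Hence the collections $N(A)$ and $N(A')$ differ whenever $A'\notin\{A,A^R\}$, and since both have size $2^{m-1}$ in the case $n=1$, strict inequality $f(A,A') < 2^{m-1}$ follows.

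Plugging $f(A,A')\leq 2^{m-1}-1$ into the Bonferroni estimate gives
\[
\epsilon(n,m) \;\leq\; \frac{1}{2}\Bigl(\tfrac{m!}{2}-1\Bigr)\bigl(1-2^{-(m-1)}\bigr)^{n},
\]
which tends to $0$ for every fixed $m$ as $n\to\infty$, as required. The main obstacle is the bound on $f(A,A')$; the endpoint-recovery sketch above yields only the weak estimate $f(A,A')<2^{m-1}$, which is sufficient for the asymptotic claim but presumably far from tight. Small-case calculations suggest the sharper bound $f(A,A')\leq 2^{m-2}$ (attained when $A'$ differs from $A$ by a swap of two adjacent candidates in the middle of the axis); proving such a bound would give substantially faster convergence but is not needed for the theorem as stated.
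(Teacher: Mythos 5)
Your upper bound is exactly the paper's argument. Your lower bound, however, takes a genuinely different route that is also correct. The paper does not use Bonferroni at all: it restricts attention to single-peaked elections containing some vote $V$ together with its reverse $\bar{V}$; since such a pair forces the axis to be $V$ or $\bar{V}$, the counts for distinct pairs $\{V,\bar{V}\}$ are disjoint, and within one axis the count is computed exactly by inclusion--exclusion as $2^{(m-1)n}-2(2^{m-1}-1)^n+(2^{m-1}-2)^n$, giving $\epsilon(n,m)=\bigl(2(2^{m-1}-1)^n-(2^{m-1}-2)^n\bigr)/2^{(m-1)n}\leq 2\bigl(1-2^{1-m}\bigr)^n$. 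You instead apply Bonferroni to the union over axis classes, which requires the auxiliary lemma that distinct classes $[A]\neq[A']$ have distinct sets of single-peaked votes; your endpoint-recovery argument for that lemma is sound (the bottom-ranked candidate of a single-peaked vote must be an axis endpoint, and peeling it off lets one reconstruct $A$ up to reversal), and it yields $f(A,A')\leq 2^{m-1}-1$ as needed. Both proofs ultimately rest on the same structural fact---enough single-peaked votes pin down the axis up to reversal---but package it differently: the paper localizes to one axis and counts exactly there, while you control all pairwise overlaps. The trade-off is quantitative: your $\epsilon(n,m)\leq\tfrac12\bigl(\tfrac{m!}{2}-1\bigr)\bigl(1-2^{1-m}\bigr)^n$ carries an extra factor of order $m!$, so it becomes nontrivial only for $n\gtrsim 2^{m-1}\ln(m!)$ rather than $n\gtrsim 2^{m-1}\ln 2$; for the theorem as stated (fixed $m$, $n\to\infty$) this makes no difference, and your observation that the true pairwise overlap is $2^{m-2}$ would only improve constants, not the asymptotics.
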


\begin{proof}
First observe that an election is single-peaked with respect to an axis if and only if it is single-peaked with respect to its reverse, i.e., the axis read from right to left. 
Thus the total number of axes on $m$ candidates that need to be considered is $m!/2$.
Second, recall that the number of votes that are single-peaked with respect to a given axis is $2^{m-1}$ (cf.~Remark~\ref{rem:number-of-sp-votes}).

Now we have gathered all facts necessary for the upper bound. 
For every one of the $m!/2$ axes considered, select an ordered set of votes from the $2^{m-1}$ votes that are single-peaked with respect to this axis. 
There are exactly $2^{(m-1)\cdot n}$ such possibilities, which yields the upper bound.
Since an election may be single-peaked with respect to more than two axes, this number is only an upper bound for $a(n,m, \GammaSP)$.

Let us turn to the lower bound.
Given a vote $V$, there are only two axes with respect to which both $V$ and its reverse $\bar{V}$ are single-peaked, namely the total orders $V$ and $\bar{V}$ themselves.
Thus the presence of the votes $V$ and $\bar{V}$ in an election forces the axis to be equal to either $V$ or $\bar{V}$.
If we fix a vote $V$, the number of single-peaked elections containing both $V$ and $\bar{V}$ can be determined easily using the inclusion-exclusion principle. 
Indeed, since $2^{(m-1)\cdot n}$ is the number of single-peaked elections for the axis $V$, $\left( 2^{m-1}-1\right)^n$ is the number of single-peaked elections for the axis $V$ that do not contain the vote $V$ (analogous for single-peaked elections that do not contain $\bar{V}$) and $\left( 2^{m-1}-2\right)^n$ is the number of single-peaked elections for the axis $V$ that neither contain $V$ nor $\bar{V}$.
Thus, the number of single-peaked elections containing both $V$ and $\bar{V}$ for some fixed vote $V$ is equal to:
\begin{equation*}
\ 2^{(m-1)\cdot n}  -2\cdot \left( 2^{m-1}-1\right)^n + \left( 2^{m-1}-2\right) ^n.
\end{equation*}
Multiplying this by the number of possibilities for the vote $V$ leads to the lower bound for $a(n,m,\Gamma_{sp})$:
\begin{align*}
 &\ \frac{m!}{2} \cdot \left( 2^{(m-1)\cdot n} + \left( 2^{m-1}-2\right) ^n -2\cdot \left( 2^{m-1}-1\right)^n \right) \\
= & \ \frac{m!}{2} \cdot 2^{(m-1)\cdot n} \cdot \left( 1- \epsilon(n,m) \right), \\
&\text{\ \ \ where }\epsilon(n,m) = \ \frac{2 \cdot(2^{m-1}-1)^n - \left( 2^{m-1}-2 \right)^n}{2^{(m-1)n}}. 
\end{align*}
Since $\epsilon(n,m)\leq 2\cdot \left(\frac{2^{m-1}-1}{2^{m-1}}\right)^n$, $\epsilon(n,m)$ tends to $0$ for every fixed $m$ and $n \rightarrow \infty$.
Clearly, not all single-peaked elections contain a pair of votes where one is the reverse of the other.
Thus this number is indeed only a lower bound.
\end{proof}

In the next theorem we prove exact enumeration formul\ae\ for $a(n,m, \GammaSP)$ for $n=2$, $m=3$ and $m=4$.
Note that for $m\leq 2$ and for $n=1$ all $(n,m)$-elections are single-peaked.
For $n>2$ and for $m>4$ we have not been able to find exact enumeration formulas.

\begin{theorem}
It holds that 
\begin{enumerate}[label=(\roman*.),labelwidth=\widthof{(iii)},itemindent=1em]
\item $a(2,m, \GammaSP)=m!\cdot\binom{2m-2}{m-1}$ for $m \geq 1$,
\item $a(n,3, \GammaSP)=6\cdot 2^{n-1}\left(2^n-1\right)$ and
\item $a(n,4, \GammaSP)=24\cdot 4^{n-1}\cdot\left(2^{n+1}-3\right)$.
\end{enumerate}%
\label{thm:SP_small_profiles}%
\end{theorem}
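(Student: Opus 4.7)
For part~(i), the plan is to apply Lemma~\ref{lem:enumeration_confavoiding<->pp} directly. In a two-vote election the $(3,3)$-configurations of Theorem~\ref{thm:sp-confs} cannot be contained, since the required injection $[3]\hookrightarrow[2]$ does not exist, so single-peakedness reduces to avoidance of the four $(2,4)$-configurations. I fix $V_1$ in one of $m!$ ways. Lemma~\ref{lem:enumeration_confavoiding<->pp} then shows that the number of admissible $V_2$ equals the number of $m$-permutations $p(V_1,V_2)$ avoiding all four corresponding patterns and their inverses. As already worked out in the proof of Corollary~\ref{cor:count-single-peaked}, the inverses coincide with the patterns themselves (up to relabelling) and the resulting permutation class is enumerated by $\binom{2m-2}{m-1}$, which multiplied by $m!$ yields the claimed formula.

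For part~(ii), the plan is inclusion-exclusion over the $3!/2=3$ axes on three candidates. By Remark~\ref{rem:number-of-sp-votes} each axis admits $2^{m-1}=4$ single-peaked votes, so the set $A_i$ of $(n,3)$-elections single-peaked with respect to axis~$i$ has cardinality $4^n$. The three axes correspond bijectively to the three endpoint-pairs, so any two distinct axes share exactly one endpoint; a vote is single-peaked with respect to both such axes iff its bottom-ranked candidate is this common endpoint, giving $|A_i\cap A_j|=2^n$. Every vote is single-peaked with respect to precisely the two axes whose shared endpoint is that vote's last candidate, so the triple intersection is empty. Inclusion-exclusion then yields $3\cdot 4^n - 3\cdot 2^n = 6\cdot 2^{n-1}(2^n-1)$.

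For part~(iii) I would combine double-counting with a structural lemma, since a direct inclusion-exclusion over the $4!/2=12$ axes would involve many higher-order intersection terms. Write $\mathrm{Ax}(E)$ for the set of axes with respect to which $E$ is single-peaked. Summing over axes counts each single-peaked election $E$ with multiplicity $k_E:=|\mathrm{Ax}(E)|$, giving $\sum_A|SP(A)|^n=12\cdot 8^n$. A direct enumeration classifies the $\binom{12}{2}=66$ pairs of distinct axes by $|SP(A)\cap SP(A')|\in\{0,2,4\}$ into sets of size $12$, $36$, and $18$ respectively. The structural lemma I would establish is: for every single-peaked election $E$, $k_E-\ell_E=1$, where $\ell_E$ is the number of ``intersection-$4$'' pairs of axes inside $\mathrm{Ax}(E)$. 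Granting this, double-counting gives
\[
a(n,4,\GammaSP) \;=\; \sum_E(k_E-\ell_E) \;=\; 12\cdot 8^n \;-\; 18\cdot 4^n,
\]
which rewrites as $24\cdot 4^{n-1}(2^{n+1}-3)$.

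The structural lemma is the main technical obstacle. The base case $n=1$ is a direct check: for each vote $V$ the set $\mathrm{Ax}(V)$ consists of four axes, corresponding to the four ways of placing the three non-peak candidates on the two sides of the peak (up to reversal of the axis), and a short computation shows that the pairs with $|SP(A)\cap SP(A')|=4$ are exactly the three pairs of axes whose underlying partitions differ by swapping two elements adjacent in the vote's preference order; these three edges form a path on the four axes, so $k_V-\ell_V=4-3=1$. For multi-vote $E$, $\mathrm{Ax}(E)$ is the intersection of several such single-vote paths, and the remaining work is to show that this intersection is again a (possibly trivial) path in the intersection-$4$ graph---that is, it is both connected and acyclic. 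This amounts to a bounded case analysis which is made tractable by the small size of the axis set, and once established the formula follows immediately.
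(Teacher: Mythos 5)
Parts (i) and (ii) of your proposal are correct and essentially coincide with the paper's argument. For (i) the paper likewise notes that only the four $(2,4)$-configurations are relevant when $n=2$, fixes $V_1$, and invokes Lemma~\ref{lem:enumeration_confavoiding<->pp} together with $S_m(1432,4132,2431,4231)=\binom{2m-2}{m-1}$. For (ii) you run a global inclusion--exclusion over the three axes ($3\cdot 4^n-3\cdot 2^n$), while the paper conditions on $V_1:abc$ and does the same two-term inclusion--exclusion on the admissible pairs of last-ranked candidates; the computations agree term by term, and your intersection counts ($|A_i|=4^n$, $|A_i\cap A_j|=2^n$, empty triple intersection because no vote has its last candidate in all three endpoint pairs) are all correct.

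Part (iii) is where you genuinely depart from the paper, and where the gap lies. Your bookkeeping is sound: each vote is compatible with exactly $2^{m-2}=4$ of the $12$ axes, a double count of (vote, axis-pair) incidences confirms that exactly $18$ of the $66$ pairs have $|SP(A)\cap SP(A')|=4$ and $36$ have intersection $2$, and $12\cdot 8^n-18\cdot 4^n$ does equal $24\cdot 4^{n-1}(2^{n+1}-3)$. But the identity $a(n,4,\GammaSP)=\sum_E(k_E-\ell_E)$ stands or falls with the structural lemma $k_E-\ell_E=1$, which you state without proof---and that lemma \emph{is} the content of part (iii). The lemma is true, and half of it is free: since $\mathrm{Ax}(E)\subseteq\mathrm{Ax}(V_1)$ and $\mathrm{Ax}(V_1)$ induces a $3$-edge path in the intersection-$4$ graph, the induced subgraph on $\mathrm{Ax}(E)$ is an induced subgraph of a path and hence automatically acyclic; only connectivity requires work. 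Connectivity reduces to checking that for every vote $W$ with $\mathrm{Ax}(W)\cap\mathrm{Ax}(V_1)\neq\emptyset$, this intersection is a \emph{contiguous} segment of the path induced by $\mathrm{Ax}(V_1)$ (intersections of contiguous segments are contiguous or empty). That is a finite verification over the $20$ votes compatible with at least one axis in $\mathrm{Ax}(V_1)$, and it does succeed---each such intersection turns out to be the whole path, a single edge, or a single vertex---but until you carry it out the proof is incomplete; your parenthetical description of which axis pairs form edges (``partitions differing by a swap of adjacent elements'') does not obviously match the actual edge set, so the check cannot be waved away. The paper sidesteps the global lemma entirely: it fixes $V_1:abcd$, observes that only the four axes in $\mathrm{Ax}(V_1)$ can serve, computes all pairwise, triple and quadruple intersections of the four corresponding $8$-element vote sets explicitly, and applies four-set inclusion--exclusion. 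That route is shorter and self-contained; yours, once the lemma is verified, is a pleasant alternative that avoids privileging the first vote, but it buys no extra generality here.
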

\begin{proof}
(i.) $a(2,m, \GammaSP)=m!\cdot\binom{2m-2}{m-1}$:
This follows from Lemma~\ref{lem:enumeration_confavoiding<->pp}. We choose the first vote arbitrarily ($m!$ possibilities).
The second vote has to be chosen in such a way that all configurations that characterize single-peakedness are avoided. Since we consider only elections with two votes, the relevant configurations are $(dabc,dcba)$,
$(adbc,dcba)$,
$(dabc,cdba)$ and  
$(adbc,cdba)$ (Theorem~\ref{thm:sp-confs}).
We obtain the permutations $\pi_1=p(dabc,dcba)=1432$, $\pi_2=p(adbc,dcba)=4132$, $\pi_3=p(dabc,cdba)=2431$ and $\pi_4=p(adbc,cdba)=4231$.
Their inverses are $\pi_1^{-1}=\pi_1$, $\pi_2^{-1}=\pi_3$, $\pi_3^{-1}=\pi_2$ and $\pi_4^{-1}=\pi_4$.
Thus, the number of $a(2,m, \GammaSP)=S_m(\pi_1, \pi_2, \pi_3, \pi_4)$, which is equal to $\binom{2m-2}{m-1}$, as shown by \cite{guibert1995permutations} and \cite{bruner2015patterns}.

(ii.) $a(n,3, \GammaSP)=m!\cdot 2^{n-1}\left(2^n-1\right)$: 
We consider all elections with three candidates.
There are $m!$ many possibilities for the first vote $V_1$.
Without loss of generality, let us consider only the vote $V_1: abc$.
Since we have only three candidates, single-peakedness boils down to having at most two last ranked candidates (cf.~Theorem~\ref{thm:bal-hae-sp}).
Due to our assumption that $V_1: abc$, we distinguish three cases: elections in which the votes rank either $a$ or $c$ last, elections in which the votes rank either $b$ or $c$ last and elections in which all votes rank $c$ last.
The number of elections in which the votes rank either $a$ or $c$ last can be determined as follows:
every vote can either be $abc$, $bac$, $cba$ or $bca$.
Hence, there are $4^{n-1}$ possibilities for elections in which the votes rank either $a$ or $c$ last and where $V_1:abc$ holds.
By the same argument, the number of elections in which the votes rank either $b$ or $c$ last is $4^{n-1}$ as well.
The number of elections where $c$ is always ranked last is $2^{n-1}$.
We obtain a total number of single-peaked elections with a fixed first vote of $4^{n-1}+4^{n-1}-2^{n-1}=2^{n-1}\cdot(2\cdot 2^{n-1}-1)$.
Given that $6$ options for the first vote exist, we obtain the stated enumeration result.

(iii.) $a(n,4, \GammaSP)=24\cdot 4^{n-1}\cdot\left(2^{n+1}-3\right)$:
As in the previous proof, we fix $V_1:abcd$.
This vote $V_1$ already rules out some possible axes.
Indeed, only eight axes are single-peaked axes for $V_1$, namely $A_1:abcd$, $A_2:bacd$, $A_3:cabd$, $A_4:cbad$, and their reverses.
Since the reverse of an axis permits the same single-peaked votes, we have to consider only $A_1, A_2, A_3, A_4$.
For $1 \leq i \leq 4$, let $W_i$ denote the set of four-candidate votes that are single-peaked with respect to axis $A_i$.
We count the number of single-peaked elections with four candidates by using the inclusion-exclusion principle, i.e.,
{\allowdisplaybreaks
\begin{align*}
a(n,4, \GammaSP) =& m!\cdot(\card{W_1}^{n-1}+\card{W_2}^{n-1}+\card{W_3}^{n-1}+\card{W_4}^{n-1}\\
&-\card{W_1\cap W_2}^{n-1}-\card{W_1\cap W_3}^{n-1}-\card{W_1\cap W_4}^{n-1}-
\\&-\card{W_2\cap W_3}^{n-1}-\card{W_2\cap W_4}^{n-1}-\card{W_3\cap W_4}^{n-1}
\\&+\card{W_1\cap W_2\cap W_3}^{n-1}+\card{W_1\cap W_2\cap W_4}^{n-1}
\\&+\card{W_1\cap W_3\cap W_4}^{n-1}+\card{W_2\cap W_3\cap W_4}^{n-1}
\\&-\card{W_1\cap W_2\cap W_3\cap W_4}^{n-1})
\end{align*}
}

It is easy to verify that $W_1\cap W_2=W_1\cap W_3 = W_2\cap W_4=\{abcd, bacd\}$.
The remaining intersections look as follows:
\begin{align*}
W_1\cap W_4 &=\{abcd, bacd, bcad, cbad\},\\
W_2\cap W_3 &=\{abcd, bacd, acbd, cabd\},\\
W_3\cap W_4 &=\{abcd, bacd, abdc, badc\}.
\end{align*}
It follows that all intersections of three or four W-sets consist also of $\{abcd, bacd\}$.
The number of votes single-peaked with respect to one axis is $2^{m-1}$ (see Remark~\ref{rem:number-of-sp-votes}), i.e., in our case 8.
We obtain 
\pushQED{\qed} 
\begin{align*}
a(n,4, \GammaSP) &= 4!\cdot(4\cdot 8^{n-1} - 3\cdot 2^{n-1} - 3\cdot 4^{n-1} +  4\cdot 2^{n-1} -  2^{n-1})=\\
&= 24\cdot (4\cdot 8^{n-1} - 3\cdot 4^{n-1}).\qedhere
\end{align*}%
\popQED
\renewcommand{\qedsymbol}{}
\end{proof}

\section{The Impartial Anonymous Culture assumption}
\label{sec:iac}

The counting results from the previous section on the Impartial Culture (IC) assumption can easily be adapted to the Impartial Anonymous Culture (IAC) assumption as we will see in the following.
For the proofs of these results, it is important to keep in mind that an election sampled according to the IAC model is a multiset of votes, i.e., the order of the votes is of no relevance.
Thus, the total number of $(n,m)$-elections is equal to $\multiset{m!}{n}=\binom{m!+n-1}{n}$.

In the following, let $p_{A}(n,m)$ denote the probability that an $(n,m)$-election created according to the IAC assumption is single-peaked.

\begin{proposition}
It holds that 
\begin{enumerate}[label=(\roman*.),labelwidth=\widthof{(iii)},itemindent=1em]
\item \[\frac{m!}{2} \frac{\multiset{2^{m-1}}{n}}{\multiset{m!}{n}} \cdot \left(1 + \epsilon(n,m)\right) \leq p_{A}(n,m) \leq \frac{m!}{2} \frac{\multiset{2^{m-1}}{n}}{\multiset{m!}{n}},\] where $\epsilon(n,m) \rightarrow 0$ as $n \rightarrow \infty$ for $n,m \geq 2,$  
\item $p_{A}(2,m)=\frac{1}{m!+1}\left( \binom{2m-2}{m-1} + 1\right)$  for $m \geq 1$ and
\item $p_{A}(n,3)=\frac{60n}{(n+2)(n+3)(n+4)}$ for $n\geq 1$.
\end{enumerate}%
\label{thm:iac}
\end{proposition}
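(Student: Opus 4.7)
The plan is to adapt the arguments of Theorem~\ref{thm:count-sp} and Theorem~\ref{thm:SP_small_profiles} to the multiset setting, observing that the total number of $(n,m)$-elections under IAC is $\multiset{m!}{n}$.

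\textbf{Part (i).} For the upper bound, I would iterate over the $m!/2$ axes (modulo reversal) and, for each axis, count the multisets of size $n$ drawn from the $2^{m-1}$ votes single-peaked with respect to that axis. This yields the count $\tfrac{m!}{2}\multiset{2^{m-1}}{n}$, which overestimates because multisets can be single-peaked with respect to several axes. Dividing by $\multiset{m!}{n}$ gives the upper bound. For the lower bound, I would use the same trick as in the IC proof: if a multiset $M$ contains a vote $V$ together with its reverse $\bar V$, then the axis is forced to be (the equivalence class of) $V$. Hence, for each fixed $V$, the number of single-peaked multisets containing both $V$ and $\bar V$ equals the number of multisets of size $n{-}2$ drawn from the remaining $2^{m-1}$ single-peaked votes for that axis, i.e.\ $\multiset{2^{m-1}}{n-2}$. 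Summing over $V$ and dividing by $2$ (since each such multiset is counted once for $V$ and once for $\bar V$), I obtain the lower bound $\tfrac{m!}{2}\multiset{2^{m-1}}{n-2}$ on the count. Expressing the ratio $\multiset{2^{m-1}}{n-2}/\multiset{2^{m-1}}{n}=\tfrac{n(n-1)}{(2^{m-1}+n-1)(2^{m-1}+n-2)}$ as $1+\epsilon(n,m)$ with $\epsilon(n,m)\to 0$ for fixed $m$ then yields the stated lower bound on $p_A(n,m)$.

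\textbf{Part (ii).} I would leverage the already-established IC count $a(2,m,\GammaSP)=m!\binom{2m-2}{m-1}$ and convert it to an IAC count by distinguishing multisets $\{V_1,V_2\}$ according to whether $V_1=V_2$ or not. Every single vote trivially forms a single-peaked multiset, contributing $m!$ multisets with $V_1=V_2$. An ordered pair with $V_1\ne V_2$ corresponds to exactly two distinct ordered elections, so the number of single-peaked multisets with $V_1\ne V_2$ equals $(a(2,m,\GammaSP)-m!)/2=\tfrac{m!}{2}(\binom{2m-2}{m-1}-1)$. Summing the two counts gives $\tfrac{m!}{2}(\binom{2m-2}{m-1}+1)$; dividing by $\multiset{m!}{2}=\tfrac{m!(m!+1)}{2}$ produces the formula.

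\textbf{Part (iii).} For $m=3$ there are $m!/2=3$ axes (up to reversal). Let $W_1,W_2,W_3$ denote, respectively, the $4$-element sets of votes single-peaked with respect to axes $abc$, $bac$ and $acb$. A direct inspection shows $|W_i\cap W_j|=2$ for $i\neq j$ (each pairwise intersection consists of one vote and its reverse) and $W_1\cap W_2\cap W_3=\emptyset$. By inclusion-exclusion, the number of single-peaked multisets of size $n$ equals
\begin{equation*}
3\multiset{4}{n}-3\multiset{2}{n}=3\binom{n+3}{3}-3(n+1)=\tfrac{n(n+1)(n+5)}{2}.
\end{equation*}
Dividing by $\multiset{6}{n}=\binom{n+5}{5}$ and simplifying gives $\frac{60n}{(n+2)(n+3)(n+4)}$.

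The only genuinely tricky step is part (i): one must carefully handle the inclusion-exclusion implicit in the lower bound, in particular the observation that a multiset can contain at most one pair $\{V,\bar V\}$ of mutually reverse votes (otherwise two distinct axes would simultaneously be single-peaked axes for it, contradicting the fact that $V$ and $\bar V$ force the axis uniquely). The remaining steps are straightforward adaptations of the IC arguments.
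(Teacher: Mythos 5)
Your proposal is correct and follows essentially the same route as the paper: the same axis-counting upper bound, the same reverse-pair argument for the lower bound in (i) (your direct count $\multiset{2^{m-1}}{n-2}$ equals the paper's three-term inclusion--exclusion expression, and your ratio $\tfrac{n(n-1)}{(2^{m-1}+n-1)(2^{m-1}+n-2)}$ is exactly the paper's $1+\epsilon(n,m)$), the same ordered-to-multiset conversion in (ii), and the same count $3\multiset{4}{n}-3\multiset{2}{n}$ in (iii). One harmless slip: for $m=3$ the pairwise intersections $W_i\cap W_j$ do have size $2$, but they consist of the two votes sharing the common admissible last-ranked candidate (e.g.\ $W_1\cap W_2=\{abc,\,bac\}$), not of a vote together with its reverse; only the cardinality enters the computation, so the result stands.
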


\begin{proof}
We follow the proofs of Theorems~\ref{thm:count-sp} and \ref{thm:SP_small_profiles}.
\begin{enumerate}[label=(\roman*.),itemindent=1em]
\item Given a fixed axis, there are $\multiset{2^{m-1}}{n}$ $(n,m)$-elections that are single-peaked with respect to this axis.
Multiplying with the number of axes that need to be considered and dividing by the total number of elections leads to the upper bound on the probability.

For the lower bound, we fix a vote $V$ and determine the number of elections that are single-peaked and contain both $V$ and $\bar{V}$:
\begin{align*}
& \multiset{2^{m-1}}{n}-2\cdot \multiset{2^{m-1}-1}{n} + \multiset{2^{m-1}-2}{n} \\
= & \multiset{2^{m-1}}{n} \cdot \left( 1 - 2 \cdot \frac{2^{m-1} -1}{2^{m-1}+n-1}+ \frac{(2^{m-1}-1)(2^{m-1}-2)}{(2^{m-1}+n-1)(2^{m-1}+n-2)}\right) \\
= & \multiset{2^{m-1}}{n} \cdot \left( 1 + \epsilon(n,m)\right), 
\end{align*}
where it can be checked easily that $\epsilon(n,m) \rightarrow 0$ as $n$ tends to infinity.
This gives the lower bound.
\item We pick one vote at random, the second vote can be one of $\binom{2m-2}{m-1}$ possibilities (cf.\ Theorem~\ref{thm:SP_small_profiles}).
However, the order of votes does not matter and we are thus double-counting profiles that consist of two distinct votes--there are $m!(\binom{2m-2}{m-1}-1)/2$ such profiles.
Adding the $m!$ profiles in which the same vote occurs twice, we obtain the following number of possibilities:
\[
\frac{m!}{2}\left( \binom{2m-2}{m-1} + 1\right) 
\]
Dividing by the total number of $(2,m)$-elections leads to the probability.
\item An election with three candidates is single-peaked if and only if it has at most two last-ranked candidates.
Using inclusion-exclusion one obtains that the total number of possible elections is:
\[
3\cdot \left( \multiset{4}{n}-\multiset{2}{n}\right) = \frac{n+1}{2} \left( (n+2)(n+3)-6\right).
\]
Again, dividing by the total number of elections gives the probability.
\end{enumerate}
\end{proof}

The case with 4 candidates that corresponds to case \textit{(iv)} in Theorem~\ref{thm:SP_small_profiles} can not be directly derived from the IC case and would need a far more involved inclusion-exclusion argument. 
It is thus omitted here.

\section{The P\'{o}lya urn model}
\label{sec:polya}

The P\'{o}lya urn model (also refereed to as the P\'{o}lya-Eggenberger urn model) \citep{urnmodels,berg1985urn,mahmoud2008polya} is an approach to sample elections with a variable degree of \textit{social homogeneity}, i.e., where preferences are not independent but voters tend to have the same preferences as other voters.
In the following the parameter $a$, a non-negative integer, describes the degree of social homogeneity. 
As we will see in a moment, the case $a=0$ corresponds to the Impartial Culture assumption, i.e., a population with no homogeneity.

The setting of the P\'{o}lya urn model for an election with $n$ votes and $m$ candidates can be described as follows.
Consider a large urn containing $m!$ balls.
Every ball represents one of the $m!$ possible votes on the candidate set and has a different color. 
An election is then created by subsequently pulling $n$ balls out of the urn according to the following rule.
The first ball is pulled at random and constitutes the first vote of the election.
Then the pulled ball is returned to the urn and $a$ other balls of the same color are added to the urn.
This procedure is repeated $n$ times until an election consisting of $n$ votes is created.

At a first glance, it might seem that the probability assigned to a certain election within the P\'{o}lya urn model depends on the order of the votes.
However this is not the case: Any election that can be obtained by rearranging a given election $\cp$, i.e, by changing the order of the votes,  has exactly the same probability of occurring as the election $\cp$ itself.
First, when the $i$-th ball is drawn from the urn, i.e., when the $i$-th vote is chosen, there are always $m! + (i-1)\cdot a$ balls present in the urn.
Second, for any vote $V$ the number of balls corresponding to $V$, i.e., the number of favourable cases,  only depends on how often the vote $V$ has already been pulled out of the urn and is equal to $(1+k\cdot a)$ where $k$ is the number of times $V$ has already been pulled.

It is now easy to give a concise characterization of this discrete distribution.
In order to alleviate notation, let us use the so called \textit{Pochhammer $k$-symbol} as introduced by \citet{diaz2007hypergeometric}.
\begin{definition}
The \textit{Pochhammer $k$-symbol} is defined as
 $ (x)_{n,k} = \prod_{i=1}^{n} (x+ (i-1)\cdot k)$ 
where in our context $x\in\mathbb{R}$ and $n, k$ are non-negative integers.
Note that $ (x)_{n,1}=x(x+1)(x+2)\ldots (x+ n-1)$ is the ordinary Pochhammer symbol (also known as rising factorial) and $ (1)_{n,1}=n!$.
\end{definition}
We can now define the probability of a given $(n,m)$-election with $\ell$ distinct votes.
Let $n_i, i \in [m!]$ be non-negative integers with $\sum_{i=1}^{m!} n_i=n$ such that, for all $i\in[\ell]$ vote $V_i$ appears $n_i$ times.
The probability of such an election is given by:
\begin{align}
\binom{n}{n_1, \ldots, n_{\ell}} \cdot \frac{\prod_{i=1}^{\ell} (1)_{n_i,a}}{(m!)_{n,a}} = \frac{n!}{\prod_{i=1}^{\ell}n_i!} \cdot \frac{\prod_{i=1}^{\ell} (1)_{n_i,a}}{(m!)_{n,a}}.
\label{eqn:distribution_Polya}
\end{align}

Note that setting $a=0$ corresponds to the case where every one of the $n$ votes is drawn from exactly the same urn, namely the urn containing every one of the $m!$ balls exactly once.
Thus, the votes are chosen independently and every vote has the same probability of occurring; this corresponds to the Impartial Culture assumption.

Setting the homogeneity factor to $a=1$ leads to the Impartial Anonymous Culture (IAC) assumption in which elections are considered as multisets  and not as lists of votes and every election has the same probability of occurring.
This can be seen as follows: Under IAC an election is fully characterized by the numbers $n_i$, $i \in [m!]$, as defined above.
Setting $a=1$ in equation~\eqref{eqn:distribution_Polya} in order to determine the probability of an election in which vote $V_i$ appears $n_i$ times, we obtain:
\[
\frac{n!}{\prod_{i=1}^{\ell}n_i!} \cdot \frac{\prod_{i=1}^{\ell} (1)_{n_i,1}}{(m!)_{n,1}}=
\frac{n!}{\prod_{i=1}^{\ell}n_i!} \cdot \frac{\prod_{i=1}^{\ell} n_i!}{\prod_{j=1}^{n} (m!+ (j-1))}=
\frac{1}{\binom{m!+n-1}{n}}.
\]
Clearly, this probability does not depend on the choice of the numbers $n_i$, $i \in [m!]$, and thus every multiset of votes has the same probability of being sampled.

Since we have consider IC and IAC already in Section~\ref{sec:IC} and \ref{sec:iac} and have obtained asymptotically optimal results, the following lower bound theorem is interesting only for $a>1$.

\begin{theorem}
Let $p_P(n,m,a)$ denote the probability that an $(n,m)$-election created according to the P\'{o}lya urn model with homogeneity $a>0$ is single-peaked.
It holds that:
\begin{align*}
p_P(n,m, a) %
\geq  \frac{m! (n-1)!}{a \left( \frac{m!}{a}\right)_{n,1}} \cdot \Bigg[&1+ \frac{2}{a} \binom{2m-2}{m-1}H_{n-1} + \\
& +\frac{n}{a} \sum_{l=2}^{n-1} \frac{(2^{m-1}-2)_{n-l,a}}{a^{n-l}} \cdot \frac{H_{l-1}}{l(n-l)!}  \Bigg], 
\end{align*}
where $H_k$ denotes the $k$-th harmonic number $\sum_{i=1}^k \frac{1}{i}$.
\label{thm:polya-bound}
\end{theorem}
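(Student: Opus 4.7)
\medskip

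The strategy adapts the inclusion–exclusion idea used in Theorem~\ref{thm:count-sp} to the Pólya urn setting. Call an $(n,m)$-election \emph{anchored at $V$} if it is single-peaked, contains both $V$ and its reverse $\bar{V}$, and is single-peaked with respect to the axis $V$ (equivalently $\bar V$). As in the IC argument, the presence of both $V$ and $\bar V$ forces the axis to be $V$ or $\bar V$, so if we sum the Pólya probability of anchored-at-$V$ elections over the $m!/2$ essentially different choices of $V$ we avoid any overcounting; the result is a lower bound on $p_P(n,m,a)$ because we ignore single-peaked elections whose axis is not pinned by a $V/\bar V$ pair. The $m!$ (instead of $m!/2$) in the prefactor will appear naturally because anchoring at $V$ and at $\bar V$ are symmetric and will both be counted before dividing.

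First I would fix $V$ (say $V:c_1c_2\cdots c_m$) and denote by $W$ the set of votes single-peaked with respect to $V$, with $|W|=2^{m-1}$. By inclusion–exclusion,
\begin{align*}
\Pr[\text{anchored at }V]
&= \Pr[\text{all votes in }W]\\
&\quad -2\Pr[\text{all votes in }W\setminus\{V\}]\\
&\quad +\Pr[\text{all votes in }W\setminus\{V,\bar V\}].
\end{align*}
Each of these three probabilities can be written via Equation~\eqref{eqn:distribution_Polya} as a sum of multinomial-weighted ratios of Pochhammer $k$-symbols, since under the Pólya urn the joint probability that the election's multiset of votes lies entirely in a subset $S\subseteq\{1,\dots,m!\}$ equals $(|S|)_{n,a}/(m!)_{n,a}$ (this follows by regrouping and applying the telescoping identity for $(x)_{n,a}$). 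Summing $m!$ copies of the inclusion–exclusion expression above would already yield a lower bound of the form $\frac{m!(m!)_{n,a}^{-1}}{2}\cdot[(2^{m-1})_{n,a}-2(2^{m-1}-1)_{n,a}+(2^{m-1}-2)_{n,a}]$, but this is too weak to match the claimed bound.

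To obtain the harmonic-number terms I would refine the counting by conditioning on the step at which a critical event first occurs in the urn process. Concretely, let $\ell$ be the smallest index such that both $V$ and $\bar V$ have been drawn in the first $\ell$ draws. Using the Pólya dynamics, the probability that the $i$-th draw equals a fixed previously undrawn vote is $1/(m!+(i-1)a)$, and once $V$ (or $\bar V$) has been drawn $k$ times its re-draw probability at step $i$ is $(1+ka)/(m!+(i-1)a)$. Summing $\Pr[\text{first }\bar V\text{ appears at step }l\mid V\text{ already present}]$ over $l$ produces the harmonic sum $H_{l-1}$, and the factor $\frac{(2^{m-1}-2)_{n-l,a}}{a^{n-l}}$ is exactly the $(n-l)$-fold Pólya weight of the remaining draws being axis-compatible and different from $V,\bar V$. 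The $\binom{2m-2}{m-1}H_{n-1}$ term corresponds to the degenerate regime $\ell\in\{1,2\}$ using the $n=2$ count from Theorem~\ref{thm:SP_small_profiles}(i), while the sum over $l\ge 2$ captures the remaining cases. Finally I would sum over the $m!$ anchor votes $V$ and collect the factorials, using $(m!)_{n,a}=a^n\,(m!/a)_{n,1}$ to produce the prefactor $m!(n-1)!/(a\,(m!/a)_{n,1})$.

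The main obstacle is the third paragraph: coaxing the harmonic sums out of the Pólya weights. The clean way is to exploit the identity $\frac{d}{da}\log(x)_{n,a}=\sum_{i=1}^{n-1}\frac{i}{x+ia}$ and to notice that differentiating the basic subset probability $(|S|)_{n,a}/(m!)_{n,a}$ in the ``color count'' direction produces sums of the form $\sum_{l}\frac{1}{l}$; this is the mechanism by which $H_{n-1}$ and $H_{l-1}$ appear. Routine bookkeeping of the multinomial factors and careful merging of the three inclusion–exclusion terms then yields the stated bound; since only anchored elections are counted, the inequality is indeed a lower bound on $p_P(n,m,a)$.
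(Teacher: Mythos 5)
There is a genuine gap. Your anchoring idea and the subset-probability formula $(|S|)_{n,a}/(m!)_{n,a}$ are both sound, and they do correspond to one piece of the paper's argument; but the claimed bound cannot be obtained from anchored elections alone, and you misattribute two of its three terms. The paper's proof decomposes the event into three \emph{disjoint} pieces: (i) all $n$ votes identical, which contributes the leading $1$ inside the bracket; (ii) single-peaked elections with \emph{exactly two distinct votes} $V_1,V_2$, where $V_2$ ranges over all $\binom{2m-2}{m-1}$ votes compatible with $V_1$ (by Theorem~\ref{thm:SP_small_profiles}(i)) and need not be $\bar V_1$, which contributes the $\frac{2}{a}\binom{2m-2}{m-1}H_{n-1}$ term; and (iii) elections containing $V$, $\bar V$ and at least one further vote from $S_V$, which contributes the final sum. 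Only piece (iii) is ``anchored'' in your sense: an all-identical election contains no reverse pair, and a generic two-vote single-peaked election does not either, so your plan of summing only over anchored elections would miss terms (i) and (ii) entirely. Your statement that the $\binom{2m-2}{m-1}H_{n-1}$ term comes from ``the degenerate regime $\ell\in\{1,2\}$'' is therefore incorrect.

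The second gap is the mechanism for producing the harmonic numbers. In the paper they arise elementarily: after writing the probability of a given vote-multiset via Equation~\eqref{eqn:distribution_Polya}, one bounds $(1)_{k,a}\geq (k-1)!\,a^{k-1}$, and the multinomial coefficient then collapses to a factor $\frac{1}{i(l-i)}$; summing over the split $i$ of the $l$ copies between $V$ and $\bar V$ (or between $V_1$ and $V_2$) and using $\sum_{i=1}^{l-1}\frac{1}{i(l-i)}=\frac{2}{l}H_{l-1}$ yields $H_{l-1}$ and $H_{n-1}$. Neither of your proposed mechanisms achieves this: the first-passage-time conditioning is not carried out and does not obviously produce $\frac{2}{l}H_{l-1}$ with the correct Pochhammer weights, and differentiating $\log(x)_{n,a}$ in $a$ produces an identity about a derivative, not an inequality for the probability itself, so it cannot serve as the engine of a lower bound. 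As written, the proposal would establish only the cruder inclusion--exclusion bound you yourself flag as too weak.
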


\begin{proof}%
Before we start with the actual proof, let us collect a few useful observations.
In the following we will use that
\begin{align}
(m!)_{n,a} = a^n\cdot\prod_{i=1}^{n}\left(\frac{m!}{a}+(i-1)\right) = a^n \left( \frac{m!}{a}\right)_{n,1}.
\label{eqn:equ_pochhammer_m!}
\end{align}
Moreover, we will use the following bound:
\begin{align}
(1)_{k,a}  = \prod_{i=1}^{k}(1+(i-1)\cdot a) \geq \prod_{i=2}^{k}(i-1)\cdot a = (k-1)! \cdot a^{k-1}.
\label{eqn:bound_pochhammer}
\end{align}
Since it holds that
\[
\frac{1}{i\cdot (l-i)}= \frac{1}{l} \cdot \left( \frac{l-i+i}{i \cdot (l-i)}\right)=  \frac{1}{l} \cdot \left( \frac{1}{ i} + \frac{1}{ l-i}\right) 
\]
the following sum can be expressed with the help of the harmonic numbers:
\begin{align}
\sum_{i=1}^{l-1} \frac{1}{i\cdot (l-i)} = \frac{1}{l} \cdot \left(\sum_{i=1}^{l-1} \frac{1}{i} + \sum_{i=1}^{l-1} \frac{1}{l-i} \right) 
= \frac{2}{l} \cdot \sum_{i=1}^{l-1} \frac{1}{i} 
= \frac{2}{l}H_{l-1}.
\label{eqn:harmonic_numbers}
\end{align}

The proof of the theorem is now split in three parts: First, we consider elections with only one distinct vote. Then, we determine a lower bound on the probability of single-peaked elections that consist of exactly two distinct votes.
Third, we give a lower bound on the number of single-peaked elections that contain at least three distinct votes.

Let us now start with the first part of the lower bound.
Clearly, an election in which all votes are identical is single-peaked.
Let us denote the probability of this event by $p_1$; in the following we fix $m,n,a$ an omit them in the notation.
According to the discrete probability distribution of the P\'{o}lya urn model \eqref{eqn:distribution_Polya}, the probability of this event is:
\begin{align*}
p_1 %
& = m! \cdot \frac{(1)_{n,a}}{(m!)_{n,a}} 
 \geq \frac{m! (n-1)! a^{n-1}}{a^n \left( \frac{m!}{a}\right)_{n,1}}  =  \frac{m!(n-1)!}{a \left( \frac{m!}{a}\right)_{n,1}},
\end{align*}
where we used Equations \eqref{eqn:equ_pochhammer_m!} and \eqref{eqn:bound_pochhammer}.

Next, we want to determine the probability that an election is sampled that is single-peaked and consists of exactly two distinct votes.
Let us denote the probability of this event by $p_2$.
From the first statement of Theorem~\ref{thm:SP_small_profiles} we know that there are exactly $m! \cdot \binom{2m-2}{m-1}$ elections with two voters and $m$ candidates that are single-peaked.
That is, if we pick a first vote $V_1$ at random, there are $\binom{2m-2}{m-1}$ votes $V_2$ that form a single-peaked election together with $V_1$.
We thus have the following (again according to \eqref{eqn:distribution_Polya}):
\begin{align*}
p_2 & = m!  \binom{2m-2}{m-1} \cdot \sum_{i=1}^{n-1} \underbrace{\mathbb{P}(i \text{ votes equal to } V_1 \text{ and } n-i \text{ votes equal to } V_2}_{= p'(i)}) 
\end{align*}
According to \eqref{eqn:distribution_Polya} the probability $p'(i)$ is equal to
\[
p'(i) = \binom{n}{i} \frac{(1)_{i,a} \cdot (1)_{n-i,a}}{(m!)_{n,a}}.
\]
Using the bound in Equation~\eqref{eqn:bound_pochhammer} and the equality in Equation~\eqref{eqn:equ_pochhammer_m!},
$p'(i)$ can be bounded from below as follows: 
\begin{align*}
p'(i) & \geq \binom{n}{i} \frac{(i-1)!\cdot  a^{i-1} \cdot (n-i-1)!\cdot a^{n-i-1}}{(m!)_{n,a}} 
  =  \frac{n!\cdot a^{n-2}  }{a^n \left( \frac{m!}{a}\right)_{n,1} \cdot i(n-i)}.
\end{align*}
For $p_2$ we thus obtain
\begin{align*}
p_2 & \geq m! \cdot \binom{2m-2}{m-1} \frac{n!  }{a^2 \left( \frac{m!}{a}\right)_{n,1}} \sum_{i=1}^{n-1} \frac{1}{  i(n-i)} \\
& = \frac{m! (n-1)!}{\left( \frac{m!}{a}\right)_{n,1}} \cdot \binom{2m-2}{m-1}  \frac{2}{a^2} H_{n-1},
\end{align*}
where the transformation from the first to the second line is done with the identity in Equation~\eqref{eqn:harmonic_numbers}.

Finally, for single-peaked elections that have more than two distinct votes, we only consider elections that contain a vote $V$ and also its reverse $\bar{V}$.
Let us denote the probability of this event by $p_3$.
As in the proof of the bounds under the IC assumption this idea is based on the following fact about single-peakedness: If a vote $V$ and its reverse vote $\bar{V}$ are both present within an election, then there are at most two axes with respect to which this election can be single-peaked, namely the axes $V$ and $\bar{V}$.
Thus, if a single-peaked election  contains both the vote $V$ and $\bar{V}$, all the other votes must be among the $2^{m-1}-2$ votes that are also single-peaked with respect to the axis $V$ (respectively $\bar{V}$).
Let us denote this set of votes that are not equal to $V$ or $\bar{V}$ and that are single-peaked with respect to the axis $V$ by $S_V$.

The probability $p_3$ is then given as follows:
\begin{align*}
p_3 = 
& \frac{m!}{2} \sum_{l=2}^{n-1}\sum_{i=1}^{l-1} \mathbb{P}(
i \text{ votes equal to } V, l-i \text{ votes equal to } \bar{V},   n-l \text{ votes in } S_V) \\
= & \frac{m!}{2} \sum_{l=2}^{n-1}\sum_{i=1}^{l-1}\binom{n}{i}\binom{n-i}{l-i} \frac{(1)_{i,a} \cdot (1)_{l-i,a} \cdot (2^{m-1}-2)_{n-l,a}}{(m!)_{n,a}},
\end{align*}
where $m!/2$ stands for the number of possible choices for the vote $V$, $i$ is the number of times the vote $V$ appears and $l-i$ is the number of times  the vote $\bar{V}$ appears.

By using the bound in Equation~\eqref{eqn:bound_pochhammer} as well as the identity in Equation~\eqref{eqn:harmonic_numbers}, we obtain the following:

\begin{align*}
p_3 & \geq 
 \frac{m!}{2(m!)_{n,a}} \sum_{l=2}^{n-1} (2^{m-1}-2)_{n-l,a} \sum_{i=1}^{l-1}\binom{n}{i}\binom{n-i}{l-i} (i-1)! (l-i-1)! a^{l-2}   \\
& = \frac{m!}{2(m!)_{n,a}} \sum_{l=2}^{n-1} (2^{m-1}-2)_{n-l,a} \cdot a^{l-2} \sum_{i=1}^{l-1} \frac{n!(n-i)!(i-1)!(l-i-1)!}{i!(n-i)!(l-i)!(n-l)!} \\
& = \frac{m!n!}{2(m!)_{n,a}} \sum_{l=2}^{n-1} (2^{m-1}-2)_{n-l,a} \cdot \frac{a^{l-2}}{(n-l)!} \sum_{i=1}^{l-1} \frac{1}{i(l-i)} \\
& = \frac{m!n!}{\left( \frac{m!}{a}\right)_{n,1}} \sum_{l=2}^{n-1} \frac{(2^{m-1}-2)_{n-l,a}}{a^{n+2-l}} \cdot \frac{1}{l(n-l)!} H_{l-1}.
\end{align*}

Since $p_1 + p_2 + p_3\leq p_P(n,m,a)$, we obtain the desired lower bound.
\end{proof}

To illustrate the rather involved lower bound of Theorem~\ref{thm:polya-bound}, we consider the special case of $a=m!$. This special case corresponds to highly homogeneous elections; the probability that the first and the second vote are identical is roughly 50\%.
It is a typical assumption that $a$ is a multiple of $m!$
\citep{slinko06,ecai/Walsh10,jair/Walsh11} since otherwise, i.e., for a fixed $a$, the actual homogeneity of elections drawn according to the P\'{o}lya urn model would depend on the number of candidates $m$.

\begin{corollary}
Let $p_P(n,m,m!)$ denote the probability that an $(n,m)$-election created according to the P\'{o}lya urn model with homogeneity $m!$ is single-peaked.
It holds that:
\[p_P(n,m, m!)\geq \frac{1}{n}\cdot \left(1 + 2\frac{\ln(n-1)}{m!} \cdot  \frac{(2m-2)!}{((m-1)!)^2}\right).\]
\label{cor:polya}%
\end{corollary}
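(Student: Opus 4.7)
The plan is to specialize the bound from Theorem~\ref{thm:polya-bound} to $a=m!$ and simplify both the prefactor and the bracket. First I would simplify the prefactor: setting $a=m!$ gives $m!/a=1$, so $\left(\tfrac{m!}{a}\right)_{n,1}=(1)_{n,1}=n!$. Therefore
\[
\frac{m!(n-1)!}{a\bigl(\tfrac{m!}{a}\bigr)_{n,1}} \;=\; \frac{m!(n-1)!}{m!\cdot n!} \;=\; \frac{1}{n},
\]
which already produces the leading $\frac{1}{n}$ factor of the claimed bound.

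Next I would handle the bracket. After substituting $a=m!$ it becomes
\[
1 \;+\; \frac{2}{m!}\binom{2m-2}{m-1}H_{n-1} \;+\; \frac{n}{m!}\sum_{l=2}^{n-1}\frac{(2^{m-1}-2)_{n-l,m!}}{(m!)^{n-l}}\cdot\frac{H_{l-1}}{l(n-l)!}.
\]
For $m\ge 2$ we have $2^{m-1}-2\ge 0$, so every Pochhammer factor $(2^{m-1}-2)_{n-l,m!}$ is non-negative, and the entire triple sum is non-negative and can be discarded as a lower bound. This leaves the two leading terms
\[
1 + \frac{2}{m!}\binom{2m-2}{m-1}H_{n-1}.
\]

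Finally I would use the elementary estimate $H_{n-1}\ge \ln(n-1)$ (valid for $n\ge 2$ by comparison with $\int_1^{n-1}\frac{dx}{x}$, and trivially for $n=2$ where both sides are $\le 1$ and $0$ respectively; the case $n=2$ makes the logarithmic term vanish, so the inequality reduces to $p_P\ge 1/n$, which holds because the event of a constant vote already contributes this much, as computed in the first part of the proof of Theorem~\ref{thm:polya-bound}). Together with the identity $\binom{2m-2}{m-1}=\frac{(2m-2)!}{((m-1)!)^2}$, combining the pieces yields
\[
p_P(n,m,m!) \;\ge\; \frac{1}{n}\left(1 + \frac{2\ln(n-1)}{m!}\cdot\frac{(2m-2)!}{((m-1)!)^2}\right),
\]
as claimed. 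There is no real obstacle here — the corollary is essentially a bookkeeping exercise; the only small care is in verifying the sign of the discarded third term (so that dropping it preserves the inequality) and in the $n=2$ boundary case of the $\ln(n-1)$ bound.
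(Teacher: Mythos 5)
Your proof is correct and follows exactly the route the paper intends: substitute $a=m!$ into Theorem~\ref{thm:polya-bound}, observe that the prefactor collapses to $\frac{m!(n-1)!}{m!\,(1)_{n,1}}=\frac{1}{n}$, discard the non-negative remaining sum, and apply $H_{n-1}\ge\ln(n-1)$ together with $\binom{2m-2}{m-1}=\frac{(2m-2)!}{((m-1)!)^2}$. The paper prints no separate proof for this corollary, and your derivation is precisely the intended bookkeeping (your only slip is calling the single sum a ``triple sum,'' which is immaterial; the $n=2$ boundary case also needs no special treatment since the chain of inequalities holds uniformly).
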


We see that for $a=m!$ and small $n$, there is a significant probability that the P\'{o}lya urn model produces single-peaked elections.

\section{Mallows model}
\label{sec:mallow}

The Mallows model \citep{mallows} assumes that there is a \emph{reference vote} and votes are more likely to appear in an election if they are close to this reference vote. Closeness is measured by the Kendall tau rank distance, defined as follows.

\begin{definition}
Given two votes $V$ and $W$ contained in an election $\cp$, the \emph{Kendall tau rank distance} $\kappa(V,W)$ is a metric that counts the number of pairwise disagreements between $V$ and $W$. 
To be more precise:
\[\kappa(V,W) = \card{\{\{c,c'\}\subseteq C: ( V:cc' \wedge W:c'c ) \vee ( V:c'c \wedge W:cc' )\}}. 
\]
\end{definition}

Note that $\kappa(V,W)$ is also the minimum number of transpositions, i.e., swaps, of adjacent elements, needed to transform $V$ into $W$ or vice versa.
We can now define the Mallows model.

\begin{definition}
Let $C$ be a set of candidates with $\card{C}=m$ and let $T(C)$ be the set of all total orders on $C$.
Given a \emph{reference vote} $V$ and a real number $\phi \in (0,1]$, the so-called \emph{dispersion parameter}, the \emph{Mallows model} is defined as follows.
Every vote $W$ of an $(n,m)$-election is determined independently from the others according to the following probability distribution:
\begin{align}
\mathbb{P}_{V,\phi}(W)= \frac{1}{Z} \cdot \phi^{\kappa(V,W)},
\label{eqn:distribution_mallows}
\end{align}
where the normalization constant $Z=\sum_{W \in T(C)} \phi^{\kappa(V,W)}$ fulfils $Z=1 \cdot (1 + \phi)\cdot (1+ \phi +\phi^2) \cdots (1+  \ldots + \phi^{m-1}) $.
\end{definition}
Note that choosing $\phi=1$ corresponds the Impartial Culture assumption and as $\phi \rightarrow 0$ one obtains a distribution that concentrates all mass on $V$.

\begin{theorem}
Let $p_M(n,m,\phi)$ denote the probability of an $(n,m)$-election being single-peaked if it is created according to the Mallows model with dispersion parameter $\phi$.
Then the following lower bound holds:
\[
p_M(n,m,\phi) \geq \left( \frac{1 + \phi\cdot(m-1) + \phi^2 \cdot (m-2)(m-3)/2}{Z}\right)^n.
\]\label{thm:mallow}
\end{theorem}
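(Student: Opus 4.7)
Because the $n$ votes of a Mallows election are sampled independently, fixing any axis $A$ yields the bound
\begin{equation*}
p_M(n,m,\phi) \;\geq\; \mathbb{P}_{V,\phi}(W \text{ is single-peaked w.r.t.\ } A)^{n}
\;=\; \left(\frac{1}{Z}\sum_{W \text{ SP w.r.t.\ } A} \phi^{\kappa(V,W)}\right)^{n}.
\end{equation*}
The plan is to choose a specific axis $A^*$ depending on the reference vote $V : c_1 c_2 \ldots c_m$ and to lower bound the inner sum by exhibiting enough single-peaked votes at small Kendall distance from $V$, namely contributions of $1$, $(m-1)\phi$, and $\binom{m-2}{2}\phi^2$ from distances $0$, $1$, and $2$.

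The axis I would use places $c_1$ in the middle, stretches the odd-indexed candidates outward to the left, and stretches the even-indexed ones outward to the right:
\begin{equation*}
A^* : \; c_{2\lceil m/2 \rceil - 1} \,\cdots\, c_5\, c_3\, c_1\, c_2\, c_4 \,\cdots\, c_{2\lfloor m/2 \rfloor}.
\end{equation*}
Its decisive combinatorial property is that for every $k \in \{2,\ldots,m\}$, the restriction of $A^*$ to the prefix $\{c_1,\ldots,c_k\}$ of $V$ has exactly $c_{k-1}$ and $c_k$ as its two extreme positions. Combined with the standard peeling characterization of single-peakedness (Remark~\ref{rem:number-of-sp-votes}) --- a vote is single-peaked with respect to an axis iff repeatedly removing its bottom-ranked candidate always deletes an extreme of the currently restricted axis --- this property immediately implies that $V$ itself is single-peaked with respect to $A^*$.

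Next, I would identify the following single-peaked votes with respect to $A^*$: (i) $V$ itself, at distance $0$; (ii) for each $i \in \{1,\ldots,m-1\}$, the vote $V^{(i)}$ obtained by swapping the adjacent pair $(c_i,c_{i+1})$ in $V$, giving $m-1$ votes at distance $1$; and (iii) for each of the $\binom{m-2}{2}$ pairs of disjoint adjacent swaps $\{(c_i,c_{i+1}),(c_j,c_{j+1})\}$ with $j \geq i+2$, the doubly swapped vote, at distance $2$. For (ii), the peeling of $V^{(i)}$ coincides with that of $V$ except that $c_i$ and $c_{i+1}$ exchange their removal order at the stage when the restricted axis is the prefix $\{c_1,\ldots,c_{i+1}\}$; by the extreme-pair property both are extremes there, so the swap preserves single-peakedness. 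Case (iii) is handled by the same observation applied twice. Summing $\phi^{\kappa(V,W)}$ over these votes gives $1 + (m-1)\phi + \binom{m-2}{2}\phi^2$, and division by $Z$ followed by exponentiation to the $n$-th power produces the stated bound.

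The main technical obstacle is verifying case (iii): the non-overlap condition $j \geq i+2$ is precisely what allows the two deviations from the $V$-peeling to be treated independently, because between the two swap stages the peeling returns to the prefix $\{c_1,\ldots,c_{j-1}\}$, on which the extreme-pair property of $A^*$ applies afresh and certifies that the remaining peeling proceeds exactly as for $V$.
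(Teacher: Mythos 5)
Your proposal is correct and follows essentially the same route as the paper: the same independence reduction to $(\mathbb{P}_{V,\phi}(W\text{ is SP}))^n$, the same interleaved axis with $c_1$ at the peak and $c_{k-1},c_k$ as the extremes of every prefix, and the same inventory of single-peaked votes at Kendall distance $0$, $1$, and $2$ (the $m-1$ adjacent swaps and the $\binom{m-2}{2}$ disjoint pairs of swaps). The only difference is cosmetic: you certify single-peakedness via the peeling characterization of Remark~\ref{rem:number-of-sp-votes}, whereas the paper argues directly that each swap creates no new peak or valley.
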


\begin{proof}
Without loss of generality, we can assume that the reference vote is $V: c_1 c_2 \ldots$ $c_m$.
We define an axis $A$ as follows: $A: \ldots c_6 c_4 c_2  c_1   c_3   c_ 5  \ldots$.
Clearly, $V$ is single-peaked with respect to $A$ and it will turn out that ``many'' other votes that are close to $V$ with respect to the Kendall tau distance are also single-peaked with respect to this axis.
See Figure~\ref{fig:mallows_axis} for a representation of the axis $A$ and the reference vote $V$ for the case of seven candidates.
In the following, we will write ``$W$ is SP'' as a short form of  ``the total order $W$ on $C$ is single-peaked with respect to axis $A$''.

\begin{figure}
\centering
\begin{tikzpicture}[yscale=0.65,xscale=0.9]

  \def\xmin{1}
  \def\xmax{7}
  \def\ymin{0}
  \def\ymax{7}
  
  \draw[step=1cm,black!20,very thin] (\xmin,\ymin) grid (\xmax,\ymax);

  \draw[->] (\xmin -0.5,\ymin) -- (\xmax+0.5,\ymin) node[right] {axis $A$};
  \foreach \x/\xtext in {1/c_7, 2/c_5, 3/c_3, 4/c_1, 5/c_2, 6/c_4, 7/c_6}
    \draw[shift={(\x,\ymin)}] (0pt,2pt) -- (0pt,-2pt) node[below] {$\xtext$};

  \foreach \x/\y in {4/7,5/6,3/5, 6/4, 2/3, 7/2, 1/1}
    \node[fill=black, circle, inner sep=0.6mm] at (\x,\y) {};
    
  \draw[solid,black] (1,1)--(2,3)--(3,5)--(4,7) -- (5,6)--(6,4)--(7,2);

  \foreach \x/\y in {4/7,5/5,6/6, 3/4, 2/3, 7/2, 1/1}
    \node at (\x,\y) {$\star$};
   
   \draw[dashed,black] (1,1)--(2,3)--(3,4)--(4,7) -- (5,5)--(6,6)--(7,2);
\end{tikzpicture}
\caption{The axis $A$ and the reference vote $V:c_1 c_2 c_3 c_4 c_5 c_6 c_7$, shown as a solid line. The dashed line represents the vote $W:c_1 c_4 c_2 c_3 c_5 c_6 c_7$ and is not single-peaked with respect to $A$. The Kendall tau distance of $V$ and $W$ is 2. Note that all votes with a Kendall tau distance to $V$ of $1$ are single-peaked with respect to $A$.}
\label{fig:mallows_axis}
\end{figure}

The idea of this proof is to bound the probability $p_M(n,m,\phi)$ from below as follows:
\begin{align*}
p_M(n,m,\phi) \geq & \left( \mathbb{P}_{V,\phi}(W: W \text{ is SP})\right)^n.
\end{align*}
Moreover, we use the following bound:
\begin{align*}
\mathbb{P}_{V,\phi}(W: W \text{ is SP})  \geq\ & \mathbb{P}_{V,\phi}(V) \\
&+\sum_{\stackrel{W \text{ is SP} \text{ and }}{ \kappa(V,W)=1}} \mathbb{P}_{V,\phi}(W)
+\sum_{\stackrel{W \text{ is SP} \text{ and }}{ \kappa(V,W)=2}} \mathbb{P}_{V,\phi}(W)
\end{align*}

First, it is clear that $\mathbb{P}(V)=1/Z$.

Second, we need to compute the number of votes $W$ that are single-peaked with respect to $A$ and that fulfil $\kappa(V,W)=1$.
Votes $W$ with $\kappa(V,W)=1$ are votes in which the order of exactly one pair of candidates $(c_i, c_{i+1})$ has been changed in $V$. 
Since there are $(m-1)$ pairs of adjacent candidates in $V$, there are exactly $(m-1)$ votes $W$ with $\kappa(V,W)=1$.
All these votes are single-peaked with respect to the axis $A$ since:
\begin{itemize}
\item If $c_1$ and $c_2$ are interchanged, the position of the peak on the axis $A$ is changed, but clearly no new peaks arise.
\item If two other candidates $c_i$ and $c_{i+1}$ are interchanged, one of these two candidates lies to the left of the peak on $A$ and the other one to the right of the peak. 
Thus, interchanging only these two candidates does not create a new peak either.
\end{itemize}
Therefore we have the following:
\[
\sum_{\stackrel{W \text{ is SP} \text{ and }}{ \kappa(V,W)=1}} \mathbb{P}_{V,\phi}(W) = (m-1) \cdot \frac{\phi}{Z}.
\]

Third, we need to compute the number of votes $W$ that are single-peaked with respect to $A$ and that fulfil $\kappa(V,W)=2$. 
Here we have a different situation: Not all votes that can be obtained by exactly two swaps of adjacent candidates in $V$ are single-peaked with respect to $A$.
For instance, first swapping the candidates $(c_3, c_4)$ and then swapping $(c_2, c_4)$ in $V$, does not lead to a vote that is single-peaked with respect to $A$.
For this example, see the vote shown as a dashed line in Figure~\ref{fig:mallows_axis}.
The problem here is that the swapping of these two pairs changes the order of $c_2$ and $c_4$, two elements that both lie on the same side of the peak on $A$, and thus a valley is created by the elements $c_1, c_2$ and $c_4$.
In general, a pair of swaps $(c_i, c_{i+1})$ and $(c_j, c_{j+1})$ is always allowed if the two pairs of candidates do not have any elements in common.
Note that the order of the two (disjoint) swaps is of no importance and without loss of generality we can assume that $i+1 <j$.

Knowing this, we can bound
the number of votes $W$ with $\kappa(V,W)=2$ that are single-peaked with respect to $A$ as follows:
If the first swap is $(c_i, c_{i+1})$ for some $i \in [1,m-3]$ and the second swap $(c_j, c_{j+1})$ is disjoint from the first one, $j$ has to fulfil $j \in [i+2,m-1]$ and thus there are $(m-i-2)$ possibilities for $(c_j, c_{j+1})$.
Summing over all possible $i$ we obtain that there are at least
\[
\sum_{i=1}^{m-3}m-i-2 = \frac{(m-2)(m-3)}{2}
\]
many votes $W$ with $\kappa(V,W)=2$ that are single-peaked with respect to $A$.
Thus we have
\[
\sum_{\stackrel{W \text{ is SP} \text{ and }}{ \kappa(V,W)=2}} \mathbb{P}_{V,\phi}(W) \geq \frac{\phi^2}{Z}\cdot \frac{(m-2)(m-3)}{2}.
\]

Putting the results for Kendall tau distance equal to 0, 1 and 2 together we obtain the desired lower bound.
\end{proof}

The lower bound result of Theorem~\ref{thm:mallow} does not give an immediate intuition for the likelihood of single-peakedness under Mallows model. Hence we consider the special case $\phi={1\over m}$. This substitution yields a simpler lower bound, which is considerably larger than, e.g., the lower bound of roughly $(2^m/m!)^n$ obtained for the Impartial Culture Assumption (Theorem~\ref{thm:count-sp}).
For a short discussion on ``realistic'' parameter values $\phi$ we refer to Section~\ref{sec:num}.

\begin{corollary}
Assuming $\phi={1\over m}$, it holds that
\begin{align*}
p_M(n,m,\phi)\geq \left( 1.5 \left( \frac{1 - \frac{1}{m}}{1- \left( \frac{1}{m}\right)^m}\right)^{m-1} \right)^n > \left(1-{{1\over m}}\right)^{(m-1)n}.
\end{align*}
\label{cor:mallows}%
\end{corollary}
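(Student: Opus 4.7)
The plan is to apply Theorem~\ref{thm:mallow} with $\phi=1/m$ and then handle the numerator and the normalisation constant $Z$ separately. Concretely, I would show (a) that the numerator $1+\phi(m-1)+\phi^2(m-2)(m-3)/2$ is at least $3/2$ once $\phi=1/m$, and (b) that $1/Z \geq ((1-\phi)/(1-\phi^m))^{m-1}$; multiplying these bounds yields the first inequality of the statement. The second (strict) inequality is then a one-line estimate.

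For step (a), I would substitute $\phi=1/m$ and combine over the common denominator $2m^2$:
\[
1+\frac{m-1}{m}+\frac{(m-2)(m-3)}{2m^2}=\frac{2m^2+2m(m-1)+(m-2)(m-3)}{2m^2}=\frac{5m^2-7m+6}{2m^2}.
\]
The inequality $(5m^2-7m+6)/(2m^2)\geq 3/2$ is equivalent to $2m^2-7m+6\geq 0$, and factoring gives $(2m-3)(m-2)\geq 0$, which holds for every $m\geq 2$.

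For step (b), I would use the product formula $Z=\prod_{k=0}^{m-1}(1+\phi+\cdots+\phi^k)=\prod_{k=0}^{m-1}\frac{1-\phi^{k+1}}{1-\phi}$. The key observation is that the $k=0$ factor equals $1$, leaving exactly $m-1$ non-trivial factors; and for $1\leq k\leq m-1$ one has $\phi^{k+1}\geq \phi^m$, hence $(1-\phi^{k+1})/(1-\phi)\leq (1-\phi^m)/(1-\phi)$. This gives
\[
Z\leq \left(\frac{1-\phi^m}{1-\phi}\right)^{m-1},\qquad \text{so}\qquad \frac{1}{Z}\geq \left(\frac{1-\phi}{1-\phi^m}\right)^{m-1}.
\]
Substituting $\phi=1/m$ and multiplying the bound from (a) with this bound on $1/Z$, then raising to the $n$-th power, yields
\[
p_M(n,m,1/m)\geq \left(1.5\left(\frac{1-1/m}{1-(1/m)^m}\right)^{m-1}\right)^n.
\]

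For the final strict inequality, after dividing both sides by $(1-1/m)^{(m-1)n}$ it reduces to $1.5>(1-(1/m)^m)^{m-1}$, which is immediate because $0<1-(1/m)^m<1$ makes the right-hand side strictly less than $1$. The only mildly delicate point is step (b): one has to notice that $Z$ contains the trivial factor $1$ in order to obtain the exponent $m-1$ rather than $m$; anything coarser (for instance, using $\sum_{j=0}^{k}\phi^j\leq 1/(1-\phi)$) would give the weaker exponent $m$ and no longer match the statement of the corollary.
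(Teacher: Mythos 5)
Your proposal is correct and follows essentially the same route as the paper: substitute $\phi=1/m$ into Theorem~\ref{thm:mallow}, bound the numerator below by $1.5$, and bound each of the $m-1$ non-trivial geometric factors of $Z$ above by $(1-\phi^m)/(1-\phi)$. You are in fact slightly more careful than the paper in two places — you verify the numerator bound explicitly via $(2m-3)(m-2)\geq 0$ (noting it is an equality at $m=2$, so $\geq$ rather than $>$ is the right claim) and you check the final strict inequality by direct comparison rather than by re-weakening the bound — but these are refinements of the same argument, not a different one.
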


\begin{proof}
Inserting $\phi={1\over m}$ in the lower bound of Theorem~\ref{thm:mallow} yields
\begin{equation}
p_M(n,m,\phi) \geq \left( \frac{1 + \frac{m-1}{m} + \frac{(m-2)(m-3)}{2m^2}}{\left(1+{1\over m}\right)\cdot \left(1+{1\over m}+{1\over m^2}\right) \cdots \left(1+{1\over m}+\dots+{1\over m^{m-1}}\right)}\right)^n.
\label{eqn:mallow_1/m_bound}
\end{equation}
As can be checked easily, the numerator is larger than $1.5$.
Moreover the finite geometric sums in the denominator are all bounded from above by 
\[
\frac{1- \left( \frac{1}{m}\right)^m}{1 - \frac{1}{m}}.
\]
We thus obtain
\[
p_M(n,m,\phi) \geq \left( 1.5 \left( \frac{1 - \frac{1}{m}}{1- \left( \frac{1}{m}\right)^m}\right)^{m-1} \right)^n.
\]
Replacing the numerator in \eqref{eqn:mallow_1/m_bound} by $1$ and every one of the finite geometric sums in the denominator by an infinite geometric row leads to the second, much rougher lower bound. 
Note that this is simply a bound on the probability of sampling an $(n,m)$-election in which all votes are equal to the reference vote.
\end{proof}

\section{Numerical Evaluations}
\label{sec:num}

\begin{figure}
\begin{center}
{
\renewcommand{\tabcolsep}{0.16cm}
\begin{tabular}{r@{\hskip 10pt}c}
\toprule $(n,m)$ & exact probability \\
\midrule
 $(2,3)$ & 1   \\
 $(5,3)$ &  0.38  \\
 $(10,3)$ &  0.05  \\
 $(25,3)$ & $1.19\cdot 10^{-4}$     \\
 $(50,3)$ & $4.70\cdot 10^{-9}$     \\\midrule 
 $(2,4)$ &  0.83 \\
 $(5,4)$ &  0.05 \\
 $(10,4)$ & $2.03\cdot 10^{-4}$   \\
 $(25,4)$ & $1.42\cdot 10^{-11}$  \\
 $(50,4)$ & $1.67\cdot 10^{-23}$   \\
 \bottomrule
\end{tabular}
\qquad
\begin{tabular}{r@{\hskip 10pt}c@{\hskip 10pt}c}
\toprule $(n,m)$ & lower bound & upper bound \\
\midrule 
 $(2,5)$ & \multicolumn{2}{c}{0.58}\\
 $(5,5)$ & $1.6 \cdot 10^{-4}$ & $2.6 \cdot 10^{-3}$  \\
 $(10,5)$ & $2.2 \cdot 10 ^{-8}$ & $1.1 \cdot 10^{-7}$  \\
 $(25,5)$ & $5.0 \cdot 10 ^{-21}$ & $8.0 \cdot 10^{-21}$  \\
 $(50,5)$ & $9.7 \cdot 10^{-43}$ & $1.1 \cdot 10^{-42}$  \\\midrule
 $(2,10)$ & \multicolumn{2}{c}{$1.3 \cdot 10^{-2}$}\\
 $(5,10)$ & $7.6 \cdot 10^{-18}$ & $1.1 \cdot 10^{-13}$  \\
 $(10,10)$ & $1.9 \cdot 10 ^{-36}$ & $5.7 \cdot 10^{-33}$  \\
 $(25,10)$ & $2.3 \cdot 10 ^{-93}$ & $1.0 \cdot 10^{-90}$  \\ 
 $(50,10)$ & $4.6 \cdot 10 ^{-189}$ & $5.5 \cdot 10^{-187}$ \\
 \bottomrule
\end{tabular}
}
\end{center}
\caption{The likelihood that an $(n,m)$-election is single-peaked when drawn according to the \textbf{Impartial Culture assumption}. The probabilities in the left table are obtained via Theorem~\ref{thm:SP_small_profiles}; those in the right table are obtained via Theorem~\ref{thm:count-sp} (except for $n=2$, which are obtained via Theorem~\ref{thm:SP_small_profiles} and are exact).}
\label{tab:results-ic}
\end{figure}

\begin{figure}
\begin{center}
{
\renewcommand{\tabcolsep}{0.16cm}
\begin{tabular}{r@{\hskip 10pt}c@{\hskip 10pt}c}
\toprule $(n,m)$ &lower bound & upper bound \\
\midrule
 $(2,3)$ &  \multicolumn{2}{c}{1}  \\
 $(5,3)$ &  \multicolumn{2}{c}{0.59}  \\
 $(10,3)$ &  \multicolumn{2}{c}{0.27}  \\
 $(25,3)$ &   \multicolumn{2}{c}{0.068}   \\
 $(50,3)$ &   \multicolumn{2}{c}{0.02} \\\midrule
 $(2,4)$ & \multicolumn{2}{c}{0.84}  \\
 $(5,4)$ & $1.46 \cdot 10^{-2}$  & $9.67 \cdot 10^{-2}$\\
 $(10,4)$ & $8.34 \cdot 10^{-4}$ &  $2.52 \cdot 10^{-3}$\\
 $(25,4)$ & $7.89 \cdot 10^{-7}$ &  $1.30 \cdot 10^{-6}$\\
 $(50,4)$ & $4.28 \cdot 10^{-10}$  & $5.58 \cdot 10^{-10}$\\
 \bottomrule
\end{tabular}
\qquad
\begin{tabular}{r@{\hskip 10pt}c@{\hskip 10pt}c}
\toprule $(n,m)$ & lower bound & upper bound \\
\midrule
 $(2,5)$ & \multicolumn{2}{c}{0.58}\\
 $(5,5)$ & $2.17 \cdot 10^{-4}$ & $4.13 \cdot 10^{-3}$  \\
 $(10,5)$ & $1.19 \cdot 10^{-7}$ &  $7.98 \cdot 10^{-7}$ \\
 $(25,5)$ & $1.44 \cdot 10^{-16}$ & $3.76 \cdot 10^{-16}$ \\
 $(50,5)$ & $2.91 \cdot 10^{-28}$ & $4.94 \cdot 10^{-28}$ \\\midrule
 $(2,10)$ & \multicolumn{2}{c}{$1.3 \cdot 10^{-2}$}\\
 $(5,10)$ & $7.78 \cdot 10^{-18}$ & $1.03 \cdot 10^{-13}$  \\
 $(10,10)$ & $2.06 \cdot 10^{-36}$ & $6.19 \cdot 10^{-33}$  \\
 $(25,10)$ & $3.69 \cdot 10^{-93}$ &  $1.76 \cdot 10^{-90}$ \\ 
 $(50,10)$ & $4.29 \cdot 10^{-188}$ & $5.05 \cdot 10^{-186}$ \\
 \bottomrule
\end{tabular}
}
\end{center}
\caption{The likelihood that an $(n,m)$-election is single-peaked when sampled according to the \textbf{Impartial Anonymous Culture} assumption. The probabilities in both tables are obtained via Proposition~\ref{thm:iac}.}
\label{tab:results-iac}
\end{figure}

\begin{figure}
\begin{center}
{
\renewcommand{\tabcolsep}{0.16cm}
\begin{tabular}{r @{\hskip 9pt}c@{\hskip 9pt}c@{\hskip 9pt}c}
\toprule $(n,m)$  & $a=10$ & $a=m!/2$ & $a=m!$ \\
\midrule
 $(10,5)$ &  $1.6\cdot10^{-4}$ & $0.13$ & $ 0.43$ \\
 $(25,5)$ &  $8.4\cdot10^{-8}$ & $3.0\cdot10^{-2}$ & $ 0.21$\\
 $(50,5)$ &  $1.5\cdot10^{-10}$ & $9.1\cdot10^{-3}$ & $ 0.12$\\
 \midrule
 $(10,10)$ &  $3.6\cdot10^{-36}$ & $2.0\cdot10^{-2}$ & $ 0.10$ \\
 $(25,10)$ &  $2.3\cdot10^{-91}$ & $3.6\cdot10^{-3}$ & $ 4.4\cdot10^{-2}$ \\
 $(50,10)$ & $2.6\cdot10^{-181}$ & $9.7\cdot10^{-4}$ & $ 2.2\cdot10^{-2}$ \\
 \bottomrule
\end{tabular}
}
\end{center}
\caption{Lower bounds  obtained from Theorem~\ref{thm:polya-bound} on the likelihood that an $(n,m)$-election is single-peaked when sampled according to the \textbf{P\'{o}lya urn model} with homogeneity $a$.}
\label{tab:results-polya}
\end{figure}

\begin{figure}
\begin{center}
{
\renewcommand{\tabcolsep}{0.16cm}
\begin{tabular}{r@{\hskip 9pt}c@{\hskip 9pt}c@{\hskip 9pt}c@{\hskip 9pt}c@{\hskip 9pt}c}
\toprule $(n,m)$ & $\phi=0.3$& $\phi=0.2$& $\phi=0.1$ & $\phi=0.05$ & $\phi=0.01$ \\
\midrule
$(10,5)$ &0.02 &$0.15$& $ 0.59$ & $0.86$ & $ 0.99$ \\
 $(25,5)$ & $5.7 \cdot 10^{-5}$ &$8.7\cdot10^{-3}$ & $ 0.26$ & $0.70$ & $0.98$ \\
 $(50,5)$ & $3.3 \cdot 10^{-9}$&$7.6 \cdot 10^{-5}$& $7.2\cdot 10^{-2}$ & $0.49$ & $0.97$ \\
 \midrule
 $(10,10)$ & $3.7 \cdot 10^{-6}$ &$2.7 \cdot 10^{-3}$& $0.20$ & $0.66$ & $0.98$ \\
 $(25,10)$ & $2.7 \cdot 10^{-14}$&$3.7 \cdot 10^{-7}$& $1.9\cdot 10^{-2}$ & $0.36$ & $0.96$ \\
 $(50,10)$ &$7.5 \cdot 10^{-28}$  &$1.4 \cdot 10^{-13}$& $3.7 \cdot 10^{-4}$ & $0.13$  & $0.92$ \\
 \bottomrule
\end{tabular}
}
\end{center}
\caption{Lower bounds via Theorem~\ref{thm:mallow} on the likelihood that an $(n,m)$-election is single-peaked when sampled according to the \textbf{Mallows model} with dispersion parameter $\phi$.}
\label{tab:results-mallows}
\end{figure}

In this section we provide numerical evaluations of our probability results from the previous sections and make some observations based on these evaluations.
In Table~\ref{tab:results-ic}, we list exact probabilities that an $(n,m)$-election is single-peaked assuming the Impartial Culture assumptions for small values of $m$ and bounds for these probabilities for a larger number of candidates.
Table~\ref{tab:results-iac} shows probabilities for elections of the same size assuming the Impartial Anonymous Culture.
Finally, Table~\ref{tab:results-polya} shows lower bounds for the P\'{o}lya urn model and Table~\ref{tab:results-mallows} shows lower bounds for the Mallows model.

We conclude this section with the following observations:
\begin{itemize}
\item The probabilities shown in Table~\ref{tab:results-ic} illustrate how unlikely it is that an election drawn according to IC is single-peaked. Single-peakedness is a strong combinatorial property, so it is not surprising that is is not satisfied by  elections sampled uniformly at random.
However, it is noteworthy that even for very small $n$ and $m$ the probability is small, e.g., for $m=n=5$ it is less than $0.0026$.
Conversely, our results indicate that even for very small real-world single-peaked data sets it is highly unlikely that their single-peakedness is the product of mere chance.
\item As can be seen in Table~\ref{tab:results-iac}, the probability that an election is single-peaked is slightly higher when it is sampled according to the IAC model than when it is sampled according to the IC model.
This can be explained heuristically as follows: Under the IAC model, elections with many coinciding votes have the same likelihood of appearing as elections consisting of many different votes.
However, in the IC  model, elections consisting of many different votes have a higher chanced of being sampled because of the many ways in which the votes can be rearranged.
It is clear that an election where most votes are the same has a higher chance of being single-peaked than an election in which very many different votes appear.
Thus, it is not surprising if the likelihood of single-peakedness is higher under the IAC than under the IC model.
\item For the P\'{o}lya urn model (Table~\ref{tab:results-polya}) we observe significantly higher probabilities.
This is of course due to our chosen parameter values $a$---recall that $a=0$ implies IC and $a=1$ implies IAC.
In particular for $a=m!$ we see that single-peaked profiles arise with considerable likelihood.
The assumption of $a=m!$ is common in the literature \citep{slinko06,ecai/Walsh10,jair/Walsh11} and even values of up to $a=3m!$ have been considered \citep{LepelleyV03}.
As a consequence, we learn from these probabilities that setting $a=m!$ generates extremely homogeneous profiles, even to the extent that they become single-peaked.
\item Even higher probabilities are shown in Table~\ref{tab:results-polya}.
We see that for $\phi=0.05$ single-peakedness is likely to be observed, e.g., with a probability $>0.49$ for $n=50$ and $m=5$.
Clearly, $\phi=0.05$ is a strong assumption and profiles obtained in this way are highly homogeneous; in fact, $\phi=0.05$ implies that all voters share the same preferences except for minor deviations, which, in turn, enables the single-peakedness property to hold.
For ${\phi=0.1}$ we still see a significant chance of single-peakedness, for larger values of $\phi$ the likelihood deteriorates quickly.
The question arises: what are typical values for $\phi$?
\cite{BetzlerBN-experiments14} compute maximum-likelihood estimates of $\phi$ for different real-world data sets. They find values\footnote{Their parameter $\theta$ is related to $\phi$ via the equation $\phi=e^{-\theta}$.} ranging from $0.7$ to almost $1$.
They also generate elections using values for $\phi$ ranging from $0,37$ to almost $1$.
Other publications generate election with $\phi$ in the interval $[0.3,1]$ \citep{aaai/BoutilierLOP14} and $[0.6, 0.9]$ \citep{ijcai/OrenFB13}.
We see that all these parameter values are too large to imply single-peakedness with non-negligible probability.
On the one hand, this implies that values for $\phi$ small enough to generate single-peakedness profiles are generally too restrictive to be found in (published) experiments.
On the other hand, our results allow to argue that the parameter values in the aforementioned papers have been chosen sensibly since the accordingly generated elections contain (at least) enough disagreement as to prevent single-peakedness to arise with significant likelihood.
\end{itemize}

\section{Conclusions and Directions for Future Research}
\label{sec:concl}

We have seen that the likelihood of single-peaked preferences varies significantly for the Impartial (Anonymous) Culture assumption, the P\'{o}lya urn and the Mallows model.
For elections chosen according to the IC or the IAC assumption, it is extremely unlikely that single-peakedness arises (cf. Table~\ref{tab:results-ic}).
With Theorem~\ref{thm:SW-bound}, we have shown that unlikeliness also holds for arbitrary domain restrictions that avoid a $(2,k)$-configuration.
In contrast, for the P\'{o}lya urn and the Mallows model with parameter $a$ ($\phi$) chosen sufficiently large (small) it is rather likely that elections are single-peaked.
Numerical probabilities in Table~\ref{tab:results-polya} and \ref{tab:results-mallows} affirm this claim.

Let us conclude with directions for future research.
Theorem~\ref{thm:SW-bound} requires that the domain restriction avoids a $(2,k)$-configuration and thus is not applicable to domain restrictions such as the single-crossing restriction \citep{roberts1977voting,DBLP:journals/scw/BredereckCW13} or 2D single-peaked restriction \citep{barbera1993generalized}.
It remains open whether this result can be extended to such domain restrictions as well and how the corresponding bound would look like.
It would also be interesting to complement Theorem~\ref{thm:SW-bound} with a corresponding lower bound result.
In general, the likelihood of other domain restrictions such as the single-crossing \citep{roberts1977voting} or the 2D single-peaked restriction \citep{barbera1993generalized} has yet to be studied.

In Section~\ref{sec:iac} we studied the likelihood of single-peakedness under the IAC assumption.
In particular, it follows from Theorem~\ref{thm:iac} that under IAC the likelihood that an $(n,3)$-election is single-peaked is $\frac{60n}{(n+2)(n+3)(n+4)}$. The likelihood that an $(n,3)$-election has a Condorcet winner is $\frac{15(n+3)^2}{16(n+2)(n+4)}$ for odd $n$ and $\frac{15(n+2)(n^2+8n+8)}{16(n+1)(n+3)(n+5)}$ for even $n$ \citep{gehrlein02}. Note that the probability for single-peakedness is significantly smaller than the latter two and, in particular, the former converges to 0 whereas the latter converge to $15/16$ for $n\rightarrow \infty$. Recently, the top monotonicity restriction has been proposed \citep{barbera_top_2011} which is a generalization of the single-peaked and single-crossing domain but still guarantees a Condorcet winner.
It would be highly interesting to know the likelihood of top monotonicity restricted preferences and whether this probability is non-zero for $n\rightarrow \infty$.

Another direction is to consider other probability distributions such as the Plackett-Luce model~\citep{plackett,luce2005individual} or Mallows mixture models where more than one reference vote is considered~\citep{Mixture}.
One could also analyze the probability distribution that arises when assuming that all elections are single-peaked and that all elections of the same size are equally likely.
This would allow  allow to ask questions such as ``How likely is it that a single-peaked election is also single-crossing?''.
Finally, a recent research direction is to consider elections that are nearly single-peaked, i.e., elections that have a small distance to being single-peaked according to some notion of distance \citep{fal-hem-hem:c:nearly-single-peaked,elk-fal-sli:c:clone-structures,cor-gal-spa:c:single-peaked-width,ijcai/CornazGS13,aspectsofSP,bredereck-nearly}.
The likelihood that elections are nearly single-peaked remains a worthwhile direction for future research.

\section*{Acknowledgements}
The first author was supported by the Austrian Science Foundation FWF, grant P25337-N23, the second author by the FWF, grant P25518-N23 and Y698 and by the European Research Council (ERC) under grant number 639945 (ACCORD).

We would like to thank Jiehua Chen for pointing out mistakes in the proof of Theorem~\ref{thm:SP_small_profiles}~(iii); the  result itself, as published~\citep{scw/LacknerL-likelihoodSP}, remains correct.

\bibliographystyle{abbrvnat}
\bibliography{../lit,../single_peak,../lit-new}

\begin{thebibliography}{62}
\providecommand{\natexlab}[1]{#1}
\providecommand{\url}[1]{\texttt{#1}}
\expandafter\ifx\csname urlstyle\endcsname\relax
  \providecommand{\doi}[1]{doi: #1}\else
  \providecommand{\doi}{doi: \begingroup \urlstyle{rm}\Url}\fi

\bibitem[Arrow(1950)]{arrow1950difficulty}
K.~J. Arrow.
\newblock A difficulty in the concept of social welfare.
\newblock \emph{Journal of Political Economy}, 58\penalty0 (4):\penalty0
  328--346, 1950.

\bibitem[Ballester and
  Haeringer(2011)]{bal-hae:j:characterization-single-peaked}
M.~A. Ballester and G.~Haeringer.
\newblock A characterization of the single-peaked domain.
\newblock \emph{Social Choice and Welfare}, 36\penalty0 (2):\penalty0 305--322,
  2011.

\bibitem[Barber{\`a} and Moreno(2011)]{barbera_top_2011}
S.~Barber{\`a} and B.~Moreno.
\newblock Top monotonicity: {A} common root for single peakedness, single
  crossing and the median voter result.
\newblock \emph{Games and Economic Behavior}, 73\penalty0 (2):\penalty0
  345--359, Nov. 2011.
\newblock ISSN 0899-8256.

\bibitem[Barber{\`a} et~al.(1993)Barber{\`a}, Gul, and
  Stacchetti]{barbera1993generalized}
S.~Barber{\`a}, F.~Gul, and E.~Stacchetti.
\newblock Generalized median voter schemes and committees.
\newblock \emph{Journal of Economic Theory}, 61\penalty0 (2):\penalty0
  262--289, 1993.

\bibitem[Berg(1985)]{berg1985urn}
S.~Berg.
\newblock Paradox of voting under an urn model: The effect of homogeneity.
\newblock \emph{Public Choice}, 47\penalty0 (2):\penalty0 377--387, 1985.

\bibitem[Betzler et~al.(2013)Betzler, Slinko, and
  Uhlmann]{BetzlerSU13fully-proportional}
N.~Betzler, A.~Slinko, and J.~Uhlmann.
\newblock On the computation of fully proportional representation.
\newblock \emph{Journal of Artificial Intelligence Research}, 47:\penalty0
  475--519, 2013.

\bibitem[Betzler et~al.(2014)Betzler, Bredereck, and
  Niedermeier]{BetzlerBN-experiments14}
N.~Betzler, R.~Bredereck, and R.~Niedermeier.
\newblock Theoretical and empirical evaluation of data reduction for exact
  {K}emeny rank aggregation.
\newblock \emph{Autonomous Agents and Multi-Agent Systems}, 28\penalty0
  (5):\penalty0 721--748, 2014.
\newblock ISSN 1387-2532.

\bibitem[Black(1948)]{black}
D.~Black.
\newblock On the rationale of group decision making.
\newblock \emph{Journal of Political Economy}, 56\penalty0 (1):\penalty0
  23--34, 1948.

\bibitem[B{\'o}na(2014)]{bona2014new}
M.~B{\'o}na.
\newblock A new record for 1324-avoiding permutations.
\newblock \emph{European Journal of Mathematics}, 1\penalty0 (1):\penalty0
  198--206, 2014.

\bibitem[Boutilier et~al.(2014)Boutilier, Lang, Oren, and
  Palacios]{aaai/BoutilierLOP14}
C.~Boutilier, J.~Lang, J.~Oren, and H.~Palacios.
\newblock Robust winners and winner determination policies under candidate
  uncertainty.
\newblock In \emph{Proc.\ of AAAI-14}, pages 1391--1397, 2014.

\bibitem[Brandt et~al.(2015)Brandt, Brill, Hemaspaandra, and
  Hemaspaandra]{bra-bri-hem-hem:j:single-peaked}
F.~Brandt, M.~Brill, E.~Hemaspaandra, and L.~A. Hemaspaandra.
\newblock Bypassing combinatorial protections: Polynomial-time algorithms for
  single-peaked electorates.
\newblock \emph{Journal of Artificial Intelligence Research}, 53:\penalty0
  439--496, 2015.

\bibitem[Bredereck et~al.(2013{\natexlab{a}})Bredereck, Chen, and
  Woeginger]{DBLP:journals/scw/BredereckCW13}
R.~Bredereck, J.~Chen, and G.~J. Woeginger.
\newblock A characterization of the single-crossing domain.
\newblock \emph{Social Choice and Welfare}, 41\penalty0 (4):\penalty0 989--998,
  2013{\natexlab{a}}.

\bibitem[Bredereck et~al.(2013{\natexlab{b}})Bredereck, Chen, and
  Woeginger]{bredereck-nearly}
R.~Bredereck, J.~Chen, and G.~J. Woeginger.
\newblock Are there any nicely structured preference profiles nearby?
\newblock In \emph{Proc.~of IJCAI-13}, pages 62--68, 2013{\natexlab{b}}.

\bibitem[Bruner(2015)]{bruner2015patterns}
M.-L. Bruner.
\newblock \emph{Patterns in labelled combinatorial objects}.
\newblock PhD thesis, TU Wien, 2015.

\bibitem[Chen et~al.(2017)Chen, Pruhs, and Woeginger]{1d-infinte}
J.~Chen, K.~R. Pruhs, and G.~J. Woeginger.
\newblock The one-dimensional {E}uclidean domain: {F}initely many obstructions
  are not enough.
\newblock \emph{Social Choice and Welfare}, 48\penalty0 (2):\penalty0 409--432,
  2017.

\bibitem[Coombs(1964)]{theoryofdata}
C.~H. Coombs.
\newblock \emph{A Theory of Data}.
\newblock John Wiley {\&} Sons, 1964.

\bibitem[Cornaz et~al.(2012)Cornaz, Galand, and
  Spanjaard]{cor-gal-spa:c:single-peaked-width}
D.~Cornaz, L.~Galand, and O.~Spanjaard.
\newblock Bounded single-peaked width and proportional representation.
\newblock In \emph{Proc.~of ECAI-12}, pages 270--275, 2012.

\bibitem[Cornaz et~al.(2013)Cornaz, Galand, and Spanjaard]{ijcai/CornazGS13}
D.~Cornaz, L.~Galand, and O.~Spanjaard.
\newblock Kemeny elections with bounded single-peaked or single-crossing width.
\newblock In \emph{Proc.\ of IJCAI-13}, pages 76--82, 2013.

\bibitem[Critchlow et~al.(1991)Critchlow, Fligner, and
  Verducci]{critchlow_probability_1991}
D.~E. Critchlow, M.~A. Fligner, and J.~S. Verducci.
\newblock Probability models on rankings.
\newblock \emph{Journal of Mathematical Psychology}, 35\penalty0 (3):\penalty0
  294--318, Sept. 1991.
\newblock ISSN 0022-2496.

\bibitem[Demange(1982)]{dem:j:sp-tree}
G.~Demange.
\newblock Single-peaked orders on a tree.
\newblock \emph{Mathematical Social Sciences}, 3\penalty0 (4):\penalty0
  389--396, 1982.

\bibitem[D{\'{i}}az and Pariguan(2007)]{diaz2007hypergeometric}
R.~D{\'{i}}az and E.~Pariguan.
\newblock On hypergeometric functions and {P}ochhammer $k$-symbol.
\newblock \emph{Divulgaciones Matem{\'a}ticas}, 15\penalty0 (2):\penalty0
  179--192, 2007.

\bibitem[Dominique~Lepelley(2003)]{LepelleyV03}
F.~V. Dominique~Lepelley.
\newblock Voting rules, manipulability and social homogeneity.
\newblock \emph{Public Choice}, 116\penalty0 (1/2):\penalty0 165--184, 2003.

\bibitem[Elkind and Lackner(2014)]{edith-approx}
E.~Elkind and M.~Lackner.
\newblock On detecting nearly structured preference profiles.
\newblock In \emph{Proc.\ of AAAI-14}, pages 661--667, 2014.

\bibitem[Elkind et~al.(2012)Elkind, Faliszewski, and
  Slinko]{elk-fal-sli:c:clone-structures}
E.~Elkind, P.~Faliszewski, and A.~M. Slinko.
\newblock Clone structures in voters' preferences.
\newblock In \emph{Proc.~of EC-12}, pages 496--513, 2012.

\bibitem[Erd\'{e}lyi et~al.(2017)Erd\'{e}lyi, Lackner, and
  Pfandler]{aspectsofSP}
G.~Erd\'{e}lyi, M.~Lackner, and A.~Pfandler.
\newblock Computational aspects of nearly single-peaked electorates.
\newblock \emph{Journal of Artificial Intelligence Research (JAIR)},
  58:\penalty0 297--337, 2017.

\bibitem[Escoffier et~al.(2008)Escoffier, Lang, and
  {\"O}zt{\"u}rk]{esc-lan-oez:c:single-peak}
B.~Escoffier, J.~Lang, and M.~{\"O}zt{\"u}rk.
\newblock Single-peaked consistency and its complexity.
\newblock In \emph{Proc.~of ECAI-08}, pages 366--370, 2008.

\bibitem[Faliszewski et~al.(2011{\natexlab{a}})Faliszewski, Hemaspaandra, and
  Hemaspaandra]{fal-hem-hem:c:nearly-single-peaked}
P.~Faliszewski, E.~Hemaspaandra, and L.~A. Hemaspaandra.
\newblock The complexity of manipulative attacks in nearly single-peaked
  electorates.
\newblock In \emph{Proc.~of TARK-11}, pages 228--237, 2011{\natexlab{a}}.

\bibitem[Faliszewski et~al.(2011{\natexlab{b}})Faliszewski, Hemaspaandra,
  Hemaspaandra, and Rothe]{Faliszewski201189}
P.~Faliszewski, E.~Hemaspaandra, L.~A. Hemaspaandra, and J.~Rothe.
\newblock The shield that never was: Societies with single-peaked preferences
  are more open to manipulation and control.
\newblock \emph{Information and Computation}, 209\penalty0 (2):\penalty0
  89--107, 2011{\natexlab{b}}.
\newblock ISSN 0890-5401.

\bibitem[Favardin et~al.(2002)Favardin, Lepelley, and
  Serais]{favardin2002borda}
P.~Favardin, D.~Lepelley, and J.~Serais.
\newblock {B}orda rule, {C}opeland method and strategic manipulation.
\newblock \emph{Review of Economic Design}, 7\penalty0 (2):\penalty0 213--228,
  2002.

\bibitem[Friedgut et~al.(2008)Friedgut, Kalai, and
  Nisan]{fri-kal-nis:c:quantiative-gib-sat}
E.~Friedgut, G.~Kalai, and N.~Nisan.
\newblock Elections can be manipulated often.
\newblock In \emph{Proc.~of FOCS-08}, pages 243--249, 2008.

\bibitem[Gehrlein(2002)]{gehrlein02}
W.~V. Gehrlein.
\newblock Condorcet's paradox and the likelihood of its occurrence: different
  perspectives on balanced preferences.
\newblock \emph{Theory and Decision}, 52\penalty0 (2):\penalty0 171--199, 2002.
\newblock ISSN 0040-5833.

\bibitem[Gehrlein(2006)]{condorcet-paradox}
W.~V. Gehrlein.
\newblock \emph{Condorcet's Paradox}.
\newblock Springer Berlin Heidelberg, 2006.

\bibitem[Gehrlein et~al.(2013)Gehrlein, Moyouwou, and
  Lepelley]{gehrlein2013impact}
W.~V. Gehrlein, I.~Moyouwou, and D.~Lepelley.
\newblock The impact of voters’ preference diversity on the probability of
  some electoral outcomes.
\newblock \emph{Mathematical Social Sciences}, 66\penalty0 (3):\penalty0
  352--365, 2013.

\bibitem[Gehrlein et~al.(2015)Gehrlein, Lepelley, and
  Moyouwou]{condorcet-diversity}
W.~V. Gehrlein, D.~Lepelley, and I.~Moyouwou.
\newblock Voters' preference diversity, concepts of agreement and {C}ondorcet's
  paradox.
\newblock \emph{Quality \& Quantity}, 49\penalty0 (6):\penalty0 2345--2368,
  2015.

\bibitem[Gibbard(1973)]{gib:j:polsci:manipulation}
A.~Gibbard.
\newblock Manipulation of voting schemes.
\newblock \emph{Econometrica}, 41\penalty0 (4):\penalty0 587--601, 1973.

\bibitem[Guibert(1995)]{guibert1995permutations}
O.~Guibert.
\newblock \emph{Combinatoire des permutations \`{a} motifs exclus en liason
  avec mots, cartes planaires et tableaux de {Y}oung}.
\newblock PhD thesis, Universit\'{e} de Bordeaux I, 1995.

\bibitem[Inada(1969)]{inada1969simple}
K.-i. Inada.
\newblock The simple majority decision rule.
\newblock \emph{Econometrica}, 37\penalty0 (3):\penalty0 490--506, 1969.
\newblock ISSN 00129682.

\bibitem[Isaksson et~al.(2012)Isaksson, Kindler, and
  Mossel]{isaksson2012geometry}
M.~Isaksson, G.~Kindler, and E.~Mossel.
\newblock The geometry of manipulation---a quantitative proof of the
  {Gibbard}-{Satterthwaite} theorem.
\newblock \emph{Combinatorica}, 32\penalty0 (2):\penalty0 221--250, 2012.

\bibitem[Johnson and Kotz(1977)]{urnmodels}
N.~L. Johnson and S.~Kotz.
\newblock \emph{Urn models and their application}.
\newblock Wiley, 1977.

\bibitem[Knoblauch(2010)]{Knoblauch}
V.~Knoblauch.
\newblock Recognizing one-dimensional {E}uclidean preference profiles.
\newblock \emph{Journal of Mathematical Economics}, 46\penalty0 (1):\penalty0
  1--5, 2010.
\newblock ISSN 0304-4068.

\bibitem[Lackner and Lackner(2017)]{scw/LacknerL-likelihoodSP}
M.-L. Lackner and M.~Lackner.
\newblock On the likelihood of single-peaked preferences.
\newblock \emph{Social Choice and Welfare}, 48\penalty0 (4):\penalty0 717--745,
  2017.
\newblock URL
  \url{http://link.springer.com/article/10.1007%2Fs00355-017-1033-0}.

\bibitem[Lepelley and Valognes(2003)]{lepelley2003voting}
D.~Lepelley and F.~Valognes.
\newblock Voting rules, manipulability and social homogeneity.
\newblock \emph{Public Choice}, 116\penalty0 (1-2):\penalty0 165--184, 2003.

\bibitem[Luce(1959)]{luce2005individual}
R.~D. Luce.
\newblock \emph{Individual choice behavior}.
\newblock Wiley, 1959.

\bibitem[Mahmoud(2008)]{mahmoud2008polya}
H.~Mahmoud.
\newblock \emph{P{\'o}lya urn models}.
\newblock Texts in Statistical Science. Chapman \& {Hall/CRC}, 2008.

\bibitem[Mallows(1957)]{mallows}
C.~L. Mallows.
\newblock Non-null ranking models. {I}.
\newblock \emph{Biometrika}, 44\penalty0 (1/2):\penalty0 114--130, 1957.
\newblock ISSN 00063444.

\bibitem[Marcus and Tardos(2004)]{marcus2004excluded}
A.~Marcus and G.~Tardos.
\newblock Excluded permutation matrices and the {S}tanley--{W}ilf conjecture.
\newblock \emph{Journal of Combinatorial Theory, Series A}, 107\penalty0
  (1):\penalty0 153--160, 2004.

\bibitem[McCabe-Dansted and Slinko(2006)]{slinko06}
J.~C. McCabe-Dansted and A.~Slinko.
\newblock Exploratory analysis of similarities between social choice rules.
\newblock \emph{Group Decision and Negotiation}, 15\penalty0 (1):\penalty0
  77--107, 2006.
\newblock ISSN 0926-2644.

\bibitem[Murphy and Martin(2003)]{Mixture}
T.~B. Murphy and D.~Martin.
\newblock Mixtures of distance-based models for ranking data.
\newblock \emph{Computational Statistics \& Data Analysis}, 41\penalty0
  (3--4):\penalty0 645 -- 655, 2003.

\bibitem[Oren et~al.(2013)Oren, Filmus, and Boutilier]{ijcai/OrenFB13}
J.~Oren, Y.~Filmus, and C.~Boutilier.
\newblock Efficient vote elicitation under candidate uncertainty.
\newblock In \emph{Proc.\ of IJCAI-13}, pages 309--316, 2013.

\bibitem[Plackett(1975)]{plackett}
R.~L. Plackett.
\newblock The analysis of permutations.
\newblock \emph{Journal of the Royal Statistical Society. Series C (Applied
  Statistics)}, 24\penalty0 (2):\penalty0 193--202, 1975.
\newblock ISSN 00359254.

\bibitem[Roberts(1977)]{roberts1977voting}
K.~W. Roberts.
\newblock Voting over income tax schedules.
\newblock \emph{Journal of Public Economics}, 8\penalty0 (3):\penalty0
  329--340, 1977.

\bibitem[Satterthwaite(1975)]{sat:j:polsci:manipulation}
M.~Satterthwaite.
\newblock Strategy-proofness and {Arrow's} conditions: Existence and
  correspondence theorems for voting procedures and social welfare
  functions\typeout{MINOR PANIC: sat: Missing issue number}.
\newblock \emph{Journal of Economic Theory}, 10\penalty0 (2):\penalty0
  187--217, 1975.

\bibitem[Sen(1966)]{sen:j:possibility}
A.~K. Sen.
\newblock A possibility theorem on majority decisions.
\newblock \emph{Econometrica}, 34:\penalty0 491--499, 1966.

\bibitem[Sen and Pattanaik(1970)]{sen-pat:j:majority}
A.~K. Sen and P.~K. Pattanaik.
\newblock Necessary and sufficient conditions for rational choice under
  majority decision.
\newblock \emph{Journal of Economic Theory}, 1:\penalty0 178--202, 1970.

\bibitem[Simion and Schmidt(1985)]{simion1985restricted}
R.~Simion and F.~W. Schmidt.
\newblock Restricted permutations.
\newblock \emph{European Journal of Combinatorics}, 6:\penalty0 383--406, 1985.

\bibitem[Slinko(2002{\natexlab{a}})]{slinko_asymptotic_2002}
A.~Slinko.
\newblock On asymptotic strategy-proofness of classical social choice rules.
\newblock \emph{Theory and Decision}, 52\penalty0 (4):\penalty0 389--398, June
  2002{\natexlab{a}}.

\bibitem[Slinko(2002{\natexlab{b}})]{slinko_asymptotic_2002a}
A.~Slinko.
\newblock On asymptotic strategy-proofness of the plurality and the run-off
  rules.
\newblock \emph{Social Choice and Welfare}, 19\penalty0 (2):\penalty0 313--324,
  2002{\natexlab{b}}.

\bibitem[Slinko(2005)]{slinko_how_2005}
A.~Slinko.
\newblock How the size of a coalition affects its chances to influence an
  election.
\newblock \emph{Social Choice and Welfare}, 26\penalty0 (1):\penalty0 143--153,
  Dec. 2005.

\bibitem[Walsh(2007)]{wal:c:uncertainty}
T.~Walsh.
\newblock Uncertainty in preference elicitation and aggregation.
\newblock In \emph{Proc.~of AAAI-07}, pages 3--8, 2007.

\bibitem[Walsh(2010)]{ecai/Walsh10}
T.~Walsh.
\newblock An empirical study of the manipulability of single transferable
  voting.
\newblock In \emph{Proc.\ of ECAI-10}, pages 257--262, 2010.

\bibitem[Walsh(2011)]{jair/Walsh11}
T.~Walsh.
\newblock Where are the hard manipulation problems?
\newblock \emph{Journal of Artificial Intelligence Research}, 42:\penalty0
  1--29, 2011.

\bibitem[West(1995)]{West1995247}
J.~West.
\newblock Generating trees and the {C}atalan and {S}chr{\"o}der numbers.
\newblock \emph{Discrete Mathematics}, 146\penalty0 (1–3):\penalty0 247 --
  262, 1995.

\end{thebibliography}

\end{document}